\def\pg{\mathhexbox278}
\newcommand{\D}{\ensuremath{\mathcal{D}}}
\newcommand{\G}{\ensuremath{\mathcal{G}}}
\renewcommand{\S}{\mathscr{S}}
\newcommand{\E}{\ensuremath{\mathcal{E}}}
\newcommand{\mb}[1]{\ensuremath{\mathbb{#1}}}
\newcommand{\N}{\mb{N}}
\newcommand{\R}{\mb{R}}
\newcommand{\C}{\mb{C}}
\newcommand{\Z}{\mb{Z}}
\newcommand{\T}{\mb{T}}
\newcommand{\W}{\mathcal{W}}
\renewcommand{\d}{\ensuremath{\partial}}
\newfont{\bl}{msbm10 scaled \magstep2}
\newtheorem{theorem}{Theorem}[section]
\newtheorem{lemma}[theorem]{Lemma}
\newtheorem{proposition}[theorem]{Proposition}
\newtheorem{definition}[theorem]{Definition}
\newtheorem{corollary}[theorem]{Corollary}
\theoremstyle{definition}
\newtheorem{remark}[theorem]{Remark}
\newtheorem{example}[theorem]{Example}
\newcommand{\beq}{\begin{equation}}
\newcommand{\eeq}{\end{equation}}
\newcommand{\isom}{\cong}
\newcommand{\col}{\colon}
\newcommand{\FT}[1]{\widehat{#1}}
\newcommand{\F}{\ensuremath{{\mathcal F}}}
\newcommand{\dis}[2]{\langle #1 , #2 \rangle}
\newcommand{\inp}[2]{\langle #1 | #2 \rangle}  
\newcommand{\notmid}{\mid\kern-0.5em\not\kern0.5em}
\newcommand{\norm}[2]{{\left\| #1 \right\|}_{#2}}
\newcommand{\Norm}[1]{\norm{#1}{}}
\newcommand{\be}{\beta}
\newcommand{\de}{\delta}
\newcommand{\eps}{\varepsilon}
\newcommand{\vphi}{\varphi}
\newcommand{\la}{\lambda}
\newcommand{\om}{\omega}
\newcommand{\Om}{\Omega}
\newcommand{\sig}{\sigma}
\newcommand{\supp}{\mathop{\mathrm{supp}}}
\renewcommand{\Re}{\ensuremath{\mathop{\mathrm{Re}}}}
\renewcommand{\Im}{\ensuremath{\mathop{\mathrm{Im}}}}
\newcommand{\ovl}[1]{\overline{#1}}
\begin{document}

\pagestyle{plain}

\title{Dirac and normal states on Weyl--von Neumann algebras}

\author{G\"unther H\"ormann}

\address{Fakult\"at f\"ur Mathematik\\
Universit\"at Wien, Austria}

\email{guenther.hoermann@univie.ac.at}

\subjclass[2010]{Primary: 81R10; Secondary: 46F99}

\keywords{Weyl algebra, quantization with constraints, von Neumann algebras, generalized functions}

\date{\today}

\begin{abstract}
We study particular classes of states on the Weyl algebra $\mathcal{W}$ associated with a symplectic vector space $S$ and on the von Neumann algebras generated in representations of $\W$. Applications in quantum physics require an implementation of constraint equations, e.g., due to gauge conditions, and can be based on so-called Dirac states. The states can be characterized by nonlinear functions on $S$ and it turns out that those corresponding to non-trivial Dirac states are typically discontinuous. We discuss general aspects of this interplay between functions on $S$ and states, but also develop an analysis for a particular example class of non-trivial Dirac states. In the last part, we focus on the specific situation with $S = L^2(\mathbb{R}^n)$ or test functions on $\R^n$ and relate properties of states on $\W$ with those of generalized functions on $\mathbb{R}^n$ or with harmonic analysis aspects of corresponding Borel measures on Schwartz functions and on temperate distributions.
\end{abstract}

\maketitle


\section{Introduction and review of the $C^*$-algebraic foundation}

Grundling and Hurst (cf.\ \cite{GH:85} and a review in \cite{Grundling:06}) have developed a program to implement constraints in the $C^*$-algebraic framework for quantum systems. Such constraints are often arising from gauge conditions in field theories and have been discussed already in a sketchy form by Dirac in the 1950's. The application to the \emph{$C^*$-algebra of the canonical commutation relations} or \emph{Weyl algebra} is of particular importance regarding the interplay with so-called regular representations (\cite{GH:88}) that guarantee the existence of corresponding unbounded field operators. In this context, attempts at rigorous $C^*$-algebraic constructions of basic features of quantum electrodynamics have been started in \cite{Grundling:88}, with improvements supplied by Narnhofer and Thirring in \cite{NT:92}.

In Subsection 1.1 we recall the definition of the Weyl or CCR algebra associated with a symplectic space $(S,\be)$, while Subsection 1.2 provides a review of Dirac states. Section 2 discusses these notions in the context of representations and normal states on the corresponding von Neumann algebras. Section 3 focusses on the case $S = L^2(\R^n)$ and puts states on the Weyl algebra into a context with generalized functions on $\R^n$ and measures on Schwartz functions and temperate distributions.

\subsection{Definition of the Weyl algebra and basic notions. }

Let $S$ be a real vector space with a nondegenerate symplectic form $\be$, i.e., a bilinear skew-symmetric map $\be \col S \times S \to \R$ such that $\be(y,z) = 0$ for all $y \in S$ implies $z = 0$.
Our main example will be to start with a real Hilbert space $(Q,(.|.))$ and then equip $S := Q \times Q$ with the nondegenerate symplectic form $\be$, given by
\beq\label{MainBeta}
    \be(y,z) := \frac{1}{2} ( (y_1|z_2) - (y_2|z_1) ) \quad \forall y = (y_1, y_2), z = (z_1, z_2) \in S.
\eeq
We can then consider $S$ as a complex Hilbert space: Define the multiplication  of $(z_1,z_2) \in Q \times Q$ by a complex scalar $r + i s$ simply as $(r + is) \cdot (z_1, z_2) := (r z_1 - s z_2, s z_1 + r z_2)$ (this is isomorphic to the complexification via the real tensor product $\C \otimes Q$). The complex inner product is then defined by
$$
     (y,z)_\C := \be(y,iz) + i \be(y,z) \quad \forall y, z \in S. 
$$
\begin{example}\label{Specases} The above construction includes two important special cases:

\noindent (i) $Q = \R^n$ with the standard inner product, which gives $S \isom \R^{2n}$ as the standard symplectic vector space; alternatively, the latter is described as $\C^n$ with the (real) symplectic form $\be(y,z) = \Im(\ovl{y}^T \cdot z )/2$. It is the reference model for quantum systems with finitely many degrees of freedom.

\noindent (ii) We obtain the complex Hilbert space $S \isom L^2(\R^n)$ from $Q$ being the subspace of real-valued functions and identifying $(f_1,f_2) \in Q \times Q$ with $f_1 + i f_2 \in L^2(\R^n)$. The (real) symplectic form is then simply given by $\be(f,g) = \Im \inp{f}{g}/2$, where $\inp{.}{.}$ denotes the standard complex $L^2$-inner product. This is typically the one-particle space for a Fock space model in quantum field theory.
\end{example}

The \emph{Weyl algebra over the symplectic space $(S,\be)$} will be denoted here by $\W(S,\be)$ (in the literature also $\text{CCR}(S,\be)$) or simply by $\W$ and is defined as the unique $C^*$-algebra (cf.\ \cite[Theorem 2.1]{Petz:90})  generated by a set $\{ W(z) \mid z \in S\}$ such that 
\beq\label{Weyl}
    \forall y, z \in S: \quad W(-z) = W(z)^* \quad\text{and}\quad W(y) W(z) = e^{i \be(y,z)} W(y + z). 
\eeq
In particular, every $W(z)$ is unitary and $W(0)$ is the unit in $\W$, which we simply denote by $1$.

Representations of the Weyl algebra as operator algebras on Hilbert spaces are at the heart of quantum physics. In particular, this is true of those representations that are \emph{regular} in the sense that for any $z_0 \in S$ there is a corresponding \emph{observable}, i.e., a self-adjoint generator of (the image of) the unitary group $(W(t z_0))_{t \in \R}$. In the case of finitely many degrees of freedom corresponding to $Q=\R^n$ as in (i) of the above example, the well-known von Neumann uniqueness result shows that, up to unitary equivalence, the Schr\"odinger representation is the unique irreducible regular representation, (cf.\  \cite[Corollary 5.2.15 and example 5.2.16]{BR:V2} or \cite[Proposition 1.1 and Theorem 1.2]{Petz:90}). Quantum field theory relies on infinitely many degrees of freedom, i.e., $S$ being not finite dimensional and it turns out that there exist uncountably many inequivalent irreducible regular representations.  An original proposal for a classification was outlined in \cite{GW:54} and information on further developments, with $S$ essentially as in Example \ref{Specases}(ii), can be found in \cite[Theorem 5.2.14 and Notes and Remarks Chapter 5]{BR:V2}. The case where $S$ is a locally convex vector space is discussed in  \cite{Hoermann:97} and \cite{Hoermann:93}, where also generalized Schr\"odinger and Fock representations are constructed.

\begin{remark} \label{Wtheta}
More generally, as was shown by Slawny in \cite{Slawny:72}, one can construct a unique Weyl algebra $\W(G,b)$ over any commutative group $G$ with a non-degenerate bicharacter $b \col G \times G \to S^1$, i.e., $b$ is a character in each argument separately and $b(x,y) b(x,y)^{-1} = 1$ for all $y \in G$ implies that $x = 0$. One can alternatively describe $\W(G,b)$ as a $C^*$-subalgebra of the group $C^*$-algebra for a corresponding Heisenberg group $G \times S^1$ as given in \cite[Proposition 1]{Hoermann:97}. In particular, this allows for a mathematical framework of certain non-commutative versions of the commutative $C^*$-algebra $C(S^1 \times S^1)$ of complex-valued continuous functions on the two-torus $S^1 \times S^1$. Note that the latter is generated by the countable family of functions $W(n)$ ($n \in \Z^2$), where $W(n)(z_1,z_2) := \exp(2 \pi i (n_1 z_1 + n_2 z_2))$ and clearly $W(m) W(n) = W(m+n)$ holds for all $m, n$ belonging to the  commutative group $\Z^2$. Some models studied in quantum statistical mechanics made use of a 
 $\theta$-deformed Weyl algebra $\mathcal{A}_\theta$ on the two-torus with real deformation parameter $\theta$ (cf., e.g.,  \cite{BNS:91,BNS:91E}). It can be described in terms of the group $\Z^2$ with $\theta$-dependent bicharacter, or more directly by the Weyl-type relations
$$ 
    W(m) W(n) = e^{i \theta (m_1 n_2 - m_2 n_1)} W(m + n) \quad (m,n \in \Z^2).
$$
The $C^*$-algebra $\mathcal{A}_\theta$ appears also in the context of rigorous models for the quantum Hall effect, where $\theta$ is proportional to the product of the magnetic field with the electric charge (cf.\ \cite[Section 8, Equation (4)]{Bellissard:86}). 
\end{remark}

\subsection{Review of Dirac states on $\W$}

Recall that a state over a $C^*$-algebra is a normalized positive linear functional. If $L$ is a subspace of $S$, e.g., representing constraints, then the Weyl relations \eqref{Weyl} show that $\W_L := \text{span}\{ W(y) \mid y \in L\}$ is a ${}^*$-subalgebra of $\W$, and the basic idea in \cite{GH:85} is that the physical states should have trivial values on the unitary generators $W(y)$.  
The following definition is slightly rephrasing the original notion from \cite[Equation (2.4)]{GH:85} and agrees with the variant used in \cite[Section 3]{GH:88}.

\begin{definition}  A state $\om$ over  $\W$ is called a \emph{Dirac state adapted to the subspace $L \subseteq S$}, if
$$
    \forall  y \in L \col \quad \om(W(y)) = 1.
$$
We will use the notion of a \emph{Dirac state on $\mathcal{A}$} for any state $\mu$ defined on a $C^*$-subalgebra $\mathcal{A} \subseteq \W$ with $\mathcal{A} \supseteq \{ W(y) \mid y \in L\}$ and such that $\mu(W(y)) = 1$ for every $y \in L$.
\end{definition}

\begin{example}\label{PGauge}
In the case $S = Q \times Q$ with a real Hilbert space and $\be$ as in \eqref{MainBeta}, we may consider the Lagrangian subspace 
 $L := \{0\} \times Q$ of ``momentum variables''. A Dirac state adapted to $L$ is then required to satisfy $\om(W(0,z_2)) = 1$ for every $z_2 \in Q$. We will discuss a construction of such a Dirac state below in Example \ref{PGauge2}.
\end{example}

According to \cite[Lemma 6.1]{GH:88CMP} (see also \cite[Theorem 17(i)]{Grundling:06}) we have that Dirac states adapted to $L$ exist  for the Weyl algebra if and only if the subspace $L$ is \emph{$\be$-isotropic}, i.e., 
\beq\label{Lisotropic}
   L \subseteq \{ z \in S \mid \forall y \in L \col \be(z,y) = 0 \} =: L^\be
\eeq
and we will always suppose this from now on. 

The following statement shows that the constraints are implemented in the GNS representation of a Dirac state by trivial action of the corresponding generators on the cyclic (vacuum) vector. In \cite[Definition 2.1]{Grundling:06}, the same property is even taken as the definition of Dirac states in case of unitary constraints. The result is included in the more general statement in \cite[Theorem 2.19(ii)]{GH:85}, but we will nevertheless give a simple direct proof in case of the Weyl algebra here.

\begin{proposition}\label{OmProp} Let $\om$ be a state over $\W$ and $\pi_\om \col \W \to B(H_\om)$ be the corresponding GNS representation with cyclic vector $\Om_\om$. Let $L$ be a subspace of $S$, then $\om$ is a Dirac state adapted to $L$, if and only if
$$
   \pi_\om(W(y)) \Om_\om = \Om_\om \quad  \forall y \in L.
$$
\end{proposition}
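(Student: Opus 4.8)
The plan is to use the GNS reconstruction formula $\om(a) = \inp{\Om_\om}{\pi_\om(a)\Om_\om}$, valid for all $a \in \W$, together with the normalization $\Norm{\Om_\om} = 1$ (which holds because $\om$ is a state, so that $\Norm{\Om_\om}^2 = \om(1) = 1$) and the observation that each $\pi_\om(W(y))$ is a \emph{unitary} operator on $H_\om$: indeed $W(y)$ is unitary in $\W$ by \eqref{Weyl}, and $\pi_\om$ is a $*$-homomorphism. Both implications then reduce to the elementary fact that a unit vector on which a unitary operator attains inner product $1$ must be fixed by it.

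For the implication ($\Leftarrow$), I would assume $\pi_\om(W(y))\Om_\om = \Om_\om$ for all $y \in L$ and simply substitute into the reconstruction formula: for each such $y$,
\[
    \om(W(y)) = \inp{\Om_\om}{\pi_\om(W(y))\Om_\om} = \inp{\Om_\om}{\Om_\om} = \Norm{\Om_\om}^2 = 1,
\]
so $\om$ is a Dirac state adapted to $L$.

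For the converse ($\Rightarrow$), I would assume $\om(W(y)) = 1$ for all $y \in L$ and fix such a $y$. Setting $U := \pi_\om(W(y))$ (unitary) and $\xi := \Om_\om$ (with $\Norm{\xi} = 1$), the reconstruction formula gives $\inp{\xi}{U\xi} = \om(W(y)) = 1$, and hence
\[
    \Norm{U\xi - \xi}^2 = \Norm{U\xi}^2 - 2\Re\inp{\xi}{U\xi} + \Norm{\xi}^2 = 1 - 2 + 1 = 0,
\]
using $\Norm{U\xi} = \Norm{\xi} = 1$. Therefore $\pi_\om(W(y))\Om_\om = U\xi = \xi = \Om_\om$, as claimed.

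I do not expect any substantial obstacle: the whole argument is the standard unitary-fixed-vector lemma fed by the GNS formula. The only points meriting a moment's care are that the cyclic vector is genuinely normalized and that unitarity of $\pi_\om(W(y))$ is inherited from $\W$, both of which are immediate from the setup. It is worth noting that isotropy of $L$ (condition \eqref{Lisotropic}) plays no role here; the equivalence holds for an arbitrary subspace $L \subseteq S$, with isotropy entering only earlier, as the criterion for the \emph{existence} of Dirac states.
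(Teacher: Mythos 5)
Your proof is correct and is essentially the same as the paper's: the paper computes $\om(T_y^*T_y) = 2 - 2\Re\om(W(y))$ for $T_y = W(y)-1$ at the algebra level (to decide membership in the left ideal $N_\om$), which via $\om(A^*A) = \Norm{\pi_\om(A)\Om_\om}{}^2$ is exactly your Hilbert-space computation $\Norm{U\xi-\xi}{}^2 = 2 - 2\Re\inp{\Om_\om}{U\Om_\om}$. Your closing remark that isotropy of $L$ plays no role here is also consistent with the paper, which invokes it only for the existence of Dirac states.
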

\begin{proof}  Recall that $H_\om$ is constructed as a completion of the quotient $\W / N_\om$ with the left ideal $N_\om = \{ A \in \W \mid \om(A^* A) = 0\}$,  the cyclic vector is $\Om_\om = 1 + N_\om$, and the operator action is given in general by $\pi_\om(A)(B + N_\om) := AB + N_\om$. Thus, $\pi_\om(W(y)) \Om_\om = W(y) + N_\om$ and it suffices to show that for every $y \in L$, $T_y:= W(y) - 1 \in N_\om$ if and only if $\om(W(y)) = 1$.  Thanks to $W(y)^* = W(-y) = W(y)^{-1}$ and the relations $\om(A^*) = \ovl{\om(A)}$  for arbitrary $A \in \W$, we have 
$$
    \om(T_y^* T_y) = \om((W(-y) -1)(W(y) - 1)) = \om(1 - W(-y) - W(y) + 1) = 2 - 2 \Re \om(W(y)).
$$
The condition $\om(T_y^* T_y) = 0$ is thus equivalent to $\Re \om(W(y)) = 1$. Since $|\om(W(y))| \leq 1$ due to unitarity of $W(y)$, we have, in fact, that $\om(T_y^* T_y) = 0$ is equivalent to $\om(W(y)) = 1$.
\end{proof}

Following \cite[Section 2]{GH:85}, one can improve the above result considerably on a $C^*$-subalgebra $\mathcal{O}$ of $\W$, which is used in implementing the \emph{algebra of observables compatible with the constraint conditions}. The construction of $\mathcal{O}$ is as follows: Let $\mathcal{L}$ be the $C^*$-subalgebra of $\W$ generated by $\{W(y) - 1 \mid y \in L\}$, $\mathcal{D}$ be the closure of $(\W \cdot \mathcal{L}) \cap  (\W \cdot \mathcal{L})^*$ in $\W$, and define 
\beq\label{DefO}
\mathcal{O} := \{ W \in \W \mid \forall D \in \mathcal{D}\col W D - D W =: [W,D] \in \mathcal{D}\}.  
\eeq
Both $\mathcal{D}$  and $\mathcal{O}$ are $C^*$-subalgebras and we clearly have $\mathcal{L} \subseteq \mathcal{D} \subseteq \mathcal{O}$. Note that $\mathcal{D}$ is a closed two-sided ideal in $\mathcal{O}$. We claim that $W(y) \in \mathcal{O}$ for every $y \in L$. In fact, it suffices to show that $[W(y),D] \in \mathcal{D}$ for every $D \in (\W \cdot \mathcal{L}) \cap  (\mathcal{L} \cdot \W)$, which is obvious upon writing $W(y)D - D W(y) = (W(y) -1) D - D (W(y) -1)$, since both terms of this last expression belong to $(\W \cdot \mathcal{L}) \cap  (\mathcal{L} \cdot \W) \subseteq \mathcal{D}$. The following result is from \cite[Theorem 2.20(iii)]{GH:85}.

\begin{proposition}\label{OProp} Let $L$ be a subspace of $S$ and $\mu$ be a Dirac state on $\mathcal{O}$ adapted to $L$. Then we have $\mathcal{D} \subseteq \ker \pi_\mu$, where $\pi_\mu$ denotes the GNS representation of $\mathcal{O}$ associated with $\mu$.
\end{proposition}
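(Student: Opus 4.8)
My plan is to reduce the claim to showing that $\mu$ annihilates the ideal $\mathcal{D}$, and then to verify this by passing to an extended Dirac state on all of $\W$. For the reduction, recall that an operator on $H_\mu$ vanishes as soon as all its matrix coefficients against the dense set $\pi_\mu(\mathcal{O})\Om_\mu$ vanish; since $\langle \pi_\mu(B)\Om_\mu, \pi_\mu(D)\pi_\mu(C)\Om_\mu\rangle = \mu(B^* D C)$, for $D \in \mathcal{O}$ we have $\pi_\mu(D) = 0$ if and only if $\mu(B^* D C) = 0$ for all $B, C \in \mathcal{O}$. Now if $D \in \mathcal{D}$, then $B^* D C \in \mathcal{D}$ because $\mathcal{D}$ is a two-sided ideal in $\mathcal{O}$; hence it is enough to prove the single statement $\mu(D') = 0$ for every $D' \in \mathcal{D}$, i.e.\ $\mu|_\mathcal{D} = 0$.

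The step I expect to be the main obstacle is precisely evaluating $\mu$ on $\mathcal{D}$: although every element of $\mathcal{D}$ lies in $\mathcal{O}$, it is obtained as a norm limit of sums of products from $\W \cdot \mathcal{L}$ whose individual factors need not lie in $\mathcal{O}$, so one cannot factor $\mu$ through the multiplicativity of the GNS representation of $\mathcal{O}$ directly. I would remove this difficulty by extending $\mu$ to a state $\tilde\mu$ on the whole algebra $\W$; such an extension exists by Hahn--Banach, since a norm-one functional $\tilde\mu$ on the unital $C^*$-algebra $\W$ with $\tilde\mu(1) = 1$ is automatically positive. Because $W(y) \in \mathcal{O}$ and $\mu(W(y)) = 1$ for all $y \in L$, the extension satisfies $\tilde\mu(W(y)) = 1$, so that $\tilde\mu$ is itself a Dirac state adapted to $L$, but now defined on all of $\W$ where every product is available.

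On $\W$ the computation from the proof of Proposition \ref{OmProp} gives, with $T_y := W(y) - 1$, that $\tilde\mu(T_y^* T_y) = 2 - 2\,\Re\,\tilde\mu(W(y)) = 0$ for every $y \in L$. The Cauchy--Schwarz inequality for the state $\tilde\mu$ then yields, for arbitrary $X \in \W$, the bound $|\tilde\mu(X T_y)|^2 \le \tilde\mu(X X^*)\,\tilde\mu(T_y^* T_y) = 0$, so that $\tilde\mu(X T_y) = 0$ for all $X \in \W$ and $y \in L$. A dense subset of $\W \cdot \mathcal{L}$ is spanned by elements $X\,T_{y_1} \cdots T_{y_k}$ with $X \in \W$, $y_j \in L$, $k \ge 1$; absorbing the initial factors into the left-hand element and applying the last vanishing shows that $\tilde\mu$ vanishes on this dense subset, hence on its closure $\ovl{\W \cdot \mathcal{L}}$. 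Since $\mathcal{D} = \ovl{(\W \cdot \mathcal{L}) \cap (\W \cdot \mathcal{L})^*} \subseteq \ovl{\W \cdot \mathcal{L}}$, we obtain $\tilde\mu(D) = 0$, and as $\tilde\mu$ restricts to $\mu$ on $\mathcal{O} \supseteq \mathcal{D}$ this is exactly $\mu|_\mathcal{D} = 0$. Together with the reduction above, this proves $\mathcal{D} \subseteq \ker \pi_\mu$.
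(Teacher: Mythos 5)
Your proof is correct. Note that the paper does not actually prove Proposition~\ref{OProp} in the text --- it is quoted from \cite[Theorem 2.20(iii)]{GH:85} --- so there is no internal argument to compare against; what you have written is a valid, self-contained replacement. Both of your key moves are consistent with tools the paper uses nearby: the vanishing $\tilde\mu(XT_y)=0$ via Cauchy--Schwarz from $\tilde\mu(T_y^*T_y)=2-2\Re\tilde\mu(W(y))=0$ is precisely the mechanism of Lemma~\ref{altformlem}, and the existence of a Dirac-state extension $\tilde\mu$ of $\mu$ to all of $\W$ (Hahn--Banach together with $\|\tilde\mu\|=\tilde\mu(1)=1$ forcing positivity) is exactly what the author asserts in the paragraph immediately following the proposition. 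Your reduction of $\mathcal D\subseteq\ker\pi_\mu$ to $\mu|_{\mathcal D}=0$ correctly uses density of $\pi_\mu(\mathcal O)\Om_\mu$ in $H_\mu$ together with the fact that $\mathcal D$ is a two-sided ideal in $\mathcal O$; the paper states that ideal property without proof, so your argument inherits that dependence, but within the paper's setup this is fair to use. The final density step is also sound: since $T_y^*=T_{-y}$, the set $\{T_y\mid y\in L\}$ is self-adjoint, so linear combinations of products $XT_{y_1}\cdots T_{y_k}$ with $k\ge 1$ are dense in $\W\cdot\mathcal L$, each such element is annihilated by $\tilde\mu$ after absorbing $XT_{y_1}\cdots T_{y_{k-1}}$ into the left factor, and continuity of $\tilde\mu$ then gives vanishing on $\ovl{\W\cdot\mathcal L}\supseteq\mathcal D$, hence $\mu|_{\mathcal D}=0$ as required.
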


In particular, we obtain from the above proposition for a Dirac state $\mu$ on $\mathcal O$ and $y \in L$ that $\pi_\mu(W(y))$ is the identity, since $W(y) - 1 \in \mathcal{L} \subseteq \mathcal{D}$.  Although one can always extend $\mu$ to a (Dirac) state $\widetilde{\mu}$ on $\W$, we will not necessarily obtain $\mathcal{D} \subseteq \ker \pi_{\widetilde{\mu}}$. (We have $\mathcal{O} \cap \ker \pi_{\widetilde{\mu}} \subseteq \ker \pi_\mu$, but cannot expect equality in general. With $\widetilde{\mu} = \om_0$ as in Example \ref{PGauge2} and $\mu$ its restriction to $\mathcal{O}$ we obtain a counterexample, because $W(0,z_2) - 1$ with $z_2 \neq 0$ is not mapped to the zero operator in the GNS representation $\pi_{\om_0}$.) If we consider instead a situation where we start with a Dirac state $\om$ on $\W$, then we may at least conclude that the operators $\pi_\om(W(y))$ with $y \in L$ act trivially on a certain subspace of $H_\om$ generated from $\mathcal{O}$.
 
\begin{corollary} Let $\om$ be a Dirac state over $\W$ adapted to the subspace $L$ of $S$. Let $\pi_\om \col \W \to B(H_\om)$ denote the corresponding GNS representation and define the closed subspace $H_\om(\mathcal{O}) \subseteq H_\om$ as the closure of $\text{\rm span}\{  \pi_\om(A) \Om_\om \mid A \in \mathcal{O}\}$. If $ y \in L$, then  $\pi_\om(W(y)) \xi = \xi$ for every $\xi \in H_\om(\mathcal{O})$. 
\end{corollary}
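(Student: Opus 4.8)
The plan is to deduce the statement from Proposition~\ref{OProp} applied to the restriction of $\om$ to the observable algebra $\mathcal{O}$. First I set $\mu := \om|_{\mathcal{O}}$ and note that $\mu$ is a Dirac state on $\mathcal{O}$ adapted to $L$: it is a state on the unital $C^*$-subalgebra $\mathcal{O}$, it satisfies $\mathcal{O}\supseteq\{W(y)\mid y\in L\}$ (shown in the text preceding Proposition~\ref{OProp}), and $\mu(W(y))=\om(W(y))=1$ for $y\in L$ because $\om$ is Dirac adapted to $L$. Proposition~\ref{OProp} then gives $\mathcal{D}\subseteq\ker\pi_\mu$.

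Next I transfer this into the null ideal $N_\om=\{A\in\W\mid\om(A^*A)=0\}$ of the GNS construction for $\om$ on all of $\W$, by showing $\mathcal{D}\subseteq N_\om$. For $D\in\mathcal{D}$ we have $\pi_\mu(D)=0$, hence $\pi_\mu(D)\Om_\mu=D+N_\mu=0$, i.e.\ $D\in N_\mu$ and so $\mu(D^*D)=0$. Since $D^*D\in\mathcal{O}$ and $\mu=\om|_{\mathcal{O}}$, this reads $\om(D^*D)=0$, that is $D\in N_\om$.

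It then remains a density reduction. As $H_\om(\mathcal{O})$ is the closure of $\{\pi_\om(A)\Om_\om\mid A\in\mathcal{O}\}$ and $\pi_\om(W(y))$ is unitary, hence bounded, it suffices to check $\pi_\om(W(y))\pi_\om(A)\Om_\om=\pi_\om(A)\Om_\om$ for each $A\in\mathcal{O}$. I rewrite the difference as $\pi_\om\big((W(y)-1)A\big)\Om_\om$ and recall that $W(y)-1\in\mathcal{L}\subseteq\mathcal{D}$, while $\mathcal{D}$ is a two-sided ideal in $\mathcal{O}$; therefore $(W(y)-1)A\in\mathcal{D}\subseteq N_\om$, which forces $\pi_\om\big((W(y)-1)A\big)\Om_\om=(W(y)-1)A+N_\om=0$. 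Passing to closures yields the assertion for every $\xi\in H_\om(\mathcal{O})$.

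The only genuinely delicate point is the second step, namely moving the conclusion $\mathcal{D}\subseteq\ker\pi_\mu$ about the GNS representation of the \emph{smaller} algebra $\mathcal{O}$ to a statement about the null ideal $N_\om$ of the state on the \emph{ambient} algebra $\W$. What makes this harmless is that membership in $N_\om$ only probes the scalar $\om(D^*D)$ with $D^*D\in\mathcal{O}$, where $\om$ and $\mu$ coincide, so one never has to compare the two GNS Hilbert spaces directly. (Equivalently one could identify $H_\om(\mathcal{O})$ with $H_\mu$ by the uniqueness of the GNS construction -- both carry a cyclic vector reproducing $\mu$ -- and then read $\pi_\mu(W(y))=\mathrm{id}$ off $W(y)-1\in\mathcal{D}\subseteq\ker\pi_\mu$; the null-ideal route above is shorter.)
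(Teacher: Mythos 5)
Your proposal is correct and follows essentially the same route as the paper: restrict $\om$ to $\mathcal{O}$, check that the restriction $\mu$ is a Dirac state there, invoke Proposition~\ref{OProp} to get $\mathcal{D}\subseteq\ker\pi_\mu$, and transfer this back using only that $\om$ and $\mu$ agree on $\mathcal{O}$. The only (cosmetic) difference is the bookkeeping of the transfer: you pass through the null ideal $N_\om$ and the ideal property $(W(y)-1)A\in\mathcal{D}$, whereas the paper reads the vanishing of $\om(BDC)=\mu(BDC)$ for $B,C\in\mathcal{O}$, $D\in\mathcal{D}$ directly as the vanishing of the inner products $\langle\pi_\om(B^*)\Om_\om\,|\,\pi_\om(D)\pi_\om(C)\Om_\om\rangle$; both land in the same place.
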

\begin{proof} Let $\mu$ denote the restriction of $\om$ to $\mathcal{O}$,  which gives a positive linear functional on $\mathcal{O}$. Since $1 \in \mathcal{O}$ and $\mu(1) = \om(1) = 1$, $\mu$ is a state on $\mathcal{O}$. It is a Dirac state on $\mathcal{O}$, since $\mu(W(y)) = \om(W(y)) = 1$ for $y \in L$. By Proposition \ref{OProp}, we have $\mathcal{D} \subseteq \ker \pi_\mu$, hence $0 = \inp{\pi_\mu(B^*) \Om_\mu}{\pi_\mu(D) \pi_\mu(C) \Om_\mu} = \inp{\Om_\mu}{\pi_\mu(B D C) \Om_\mu} = \mu(B D C) = \om(B D C)$ holds for every $D \in \mathcal{D}$ and for all $B, C \in \mathcal{O}$. Since $\om(BDC) = \inp{\pi_\om(B^*)\Om_\om}{\pi_\om(D) \pi_\om(C)\Om_\om}$, we obtain that $\pi_\om(D) \xi = 0$ for $D \in \mathcal{D}$ and $\xi \in \text{span} \{ \pi_\om(A) \Om_\om \mid A \in \mathcal{O}\} = H_\om(\mathcal{O})$, which proves the claim upon setting $D = W(y) -1$.
\end{proof}

Recall that a state $\om$ on the Weyl algebra is called \emph{regular}, if the corresponding GNS representation $\pi_\om \col \W \to B(H_\om)$ is \emph{regular}, that is, for every $z \in S$, the map $\R \to B(H_\om)$ given by $t \mapsto \pi_\om(W(t z))$ is continuous with respect to the strong operator topology on $B(H_\om)$, i.e., $t \mapsto \pi_\om(W(t z)) \xi$ is continuous $\R \to H_\om$ for every $\xi \in H_\om$. The relevance of this notion for physics stems from the fact that it guarantees the existence of self-adjoint field operators $\Phi(z)$ ($z \in S$) as generators of the unitary groups $(\pi_\om(W(tz)))_{t \in \R}$. 

\begin{remark} We recall that for any $^*$-representation $\pi \col \W \to B(H)$ of $\W$ on some Hilbert space $H$, one has equivalence of regularity in the sense of the strong operator topology with that in the weak operator topology. Trivially, the former implies the latter, which is characterized by requiring that for every $z \in S$, the map $t \mapsto \inp{\xi}{\pi(W(tz))\eta}$ is continuous $\R \to \C$ for arbitrary $\xi, \eta \in H$. To prove the reverse implication, we first note that by the group property of $(\pi(W(tz)))_{t \in \R}$, it suffices to show continuity at $t = 0$, and then apply the Weyl relations to observe
\begin{multline*}
    \Norm{\pi(W(tz)) \xi - \xi}^2 = \inp{\xi}{\pi(W(tz) -1)^* \pi(W(tz) - 1) \xi} =
    \inp{\xi}{\pi((W(tz) - 1)^*(W(tz) -1)) \xi}\\  = \inp{\xi}{\pi((W(-tz) - 1)(W(tz) -1)) \xi} = 
    2 \Norm{\xi}^2 - \inp{\xi}{\pi(W(-tz))\xi} - \inp{\xi}{\pi(W(tz))\xi}.
\end{multline*}
\end{remark}

If $\om$ is a regular state on $\W$, then we clearly have  for every $z \in S$ that the map $f_z \col \R \to \C$ with $f_z(t) := \inp{\Om_\om}{\pi_\om(W(t z)) \Om_\om} = \om(W(tz))$ is continuous, in particular, $\lim_{t \to 0} f_z(t) = f_z(0) = \om(1) = 1$. According to \cite[Proposition 3.5]{Petz:90} even a converse of this is true, thus we have the following result.
\begin{proposition}\label{RegProp} A state $\om$ over $\W$ is regular, if and only if $\lim\limits_{t \to 0} \om(W(t z)) = 1$ for every $z \in S$. 
\end{proposition}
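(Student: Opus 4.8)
The plan is to prove the two implications separately, since the forward direction is essentially the observation made just before the statement and the reverse direction carries all the content. For the forward direction I would simply note that regularity of $\om$ makes the orbit map $t \mapsto \pi_\om(W(tz))\Om_\om$ continuous, hence $f_z(t) = \inp{\Om_\om}{\pi_\om(W(tz))\Om_\om} = \om(W(tz))$ is continuous, and evaluating at $t = 0$ gives $\lim_{t\to 0}\om(W(tz)) = \om(1) = 1$.

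For the converse I would assume $\lim_{t\to 0}\om(W(tz)) = 1$ for every $z \in S$ and fix such a $z$. First I would record that, because $\be(z,z) = 0$, the Weyl relations \eqref{Weyl} give $W(sz)W(tz) = W((s+t)z)$, so that $U_z(t) := \pi_\om(W(tz))$ is a one-parameter group of unitaries on $H_\om$. Since each $U_z(t)$ is an isometry, $\Norm{U_z(t+s)\xi - U_z(t)\xi} = \Norm{U_z(s)\xi - \xi}$, so strong continuity of $t\mapsto U_z(t)\xi$ reduces to continuity at $t = 0$; moreover the continuity set $C_z := \{\xi \in H_\om \mid t \mapsto U_z(t)\xi \text{ continuous}\}$ is a \emph{closed} linear subspace, by a routine $3\eps$-estimate exploiting $\Norm{U_z(t)} = 1$.

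The core step is then to show $C_z = H_\om$. I would verify directly that every vector $\eta := \pi_\om(W(y))\Om_\om$ with $y \in S$ lies in $C_z$. Here $\Norm{\eta}^2 = \om(W(-y)W(y)) = \om(1) = 1$, and applying the Weyl relations twice yields $W(-y)W(tz)W(y) = e^{-2it\be(y,z)}W(tz)$, so that
$$
   \inp{\eta}{U_z(t)\eta} = \om\big(W(-y)W(tz)W(y)\big) = e^{-2it\be(y,z)}\,\om(W(tz)).
$$
By hypothesis this tends to $1 = \Norm{\eta}^2$ as $t\to 0$, hence $\Norm{U_z(t)\eta - \eta}^2 = 2\Norm{\eta}^2 - 2\Re\inp{\eta}{U_z(t)\eta}\to 0$, i.e.\ $\eta \in C_z$. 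Since the Weyl relations make $\text{span}\{W(y)\mid y \in S\}$ a dense ${}^*$-subalgebra of $\W$ and $\Om_\om$ is cyclic, the vectors $\pi_\om(W(y))\Om_\om$ span a dense subspace of $H_\om$; as $C_z$ is closed and contains all of them, $C_z = H_\om$. Letting $z$ vary gives regularity of $\pi_\om$, hence of $\om$.

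The hard part is the passage from the cyclic vector to all of $H_\om$: continuity cannot be checked pointwise for a generic $\xi$, so I would combine closedness of $C_z$ with cyclicity of $\Om_\om$. The phase identity $W(-y)W(tz)W(y) = e^{-2it\be(y,z)}W(tz)$ is what makes this feasible, since it reduces $\inp{\eta}{U_z(t)\eta}$ back to $\om(W(tz))$ up to a scalar that is continuous in $t$ and equals $1$ at $t = 0$.
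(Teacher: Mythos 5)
Your proof is correct. The forward implication coincides with the observation the paper makes immediately before the proposition. The converse, which carries all the content, is not actually proved in the paper: the author simply cites Proposition 3.5 of Petz's book. Your argument supplies a complete, self-contained proof of that direction, and every step checks out: skew-symmetry gives $\be(z,z)=0$, so $t \mapsto \pi_\om(W(tz))$ is a one-parameter unitary group and continuity reduces to continuity at $t=0$; the continuity set $C_z$ is indeed a closed linear subspace (the $3\eps$ estimate works because the operators $\pi_\om(W(tz))$ are uniformly bounded); two applications of \eqref{Weyl} give $W(-y)W(tz)W(y) = e^{-2it\be(y,z)}W(tz)$, so that $\inp{\eta}{\pi_\om(W(tz))\eta} = e^{-2it\be(y,z)}\,\om(W(tz)) \to 1 = \Norm{\eta}^2$ for $\eta = \pi_\om(W(y))\Om_\om$; and the identity $\Norm{\pi_\om(W(tz))\eta - \eta}^2 = 2\Norm{\eta}^2 - 2\Re\inp{\eta}{\pi_\om(W(tz))\eta}$ --- the same device the paper uses in its remark on the equivalence of weak and strong operator regularity --- converts this into continuity at $t=0$. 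Finally, cyclicity of $\Om_\om$, norm-density of $\mathrm{span}\{W(y) \mid y \in S\}$ in $\W$ (a ${}^*$-subalgebra by \eqref{Weyl}), and contractivity of $\pi_\om$ give the density needed to conclude $C_z = H_\om$. This conjugation trick is also the one the paper itself exploits later in Example \ref{ExC0} (Equation \eqref{fxieta}), and it is essentially the standard argument behind the result cited from Petz; what your version buys is self-containedness and transparency about the mechanism --- conjugation by Weyl generators reduces all relevant matrix elements back to the generating functional $t \mapsto \om(W(tz))$ --- at the cost of the paragraph the paper economizes by citation.
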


The condition for a Dirac state can be put into the following alternative form, which follows from a discussion around Theorem 2.6(ii) in  \cite{GH:85}, but we give a simple adaptation of its proof here.
\begin{lemma}\label{altformlem} A state $\om$ over $\W$ is a Dirac state adapted to $L$, if and only if we have 
$$
    \forall A \in \W, \forall y \in L \col \quad \om(A W(y)) = \om(A) = \om(W(y) A).
$$
\end{lemma}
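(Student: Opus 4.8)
The equivalence has an easy direction and a harder one. The backward implication is immediate: if $\om(AW(y)) = \om(A) = \om(W(y)A)$ holds for all $A \in \W$ and $y \in L$, then specializing to $A = 1$ gives $\om(W(y)) = \om(1) = 1$, so $\om$ is a Dirac state adapted to $L$. The substance of the lemma is the forward direction, and the plan is to exploit the Cauchy--Schwarz inequality for the positive sesquilinear form $(A,B) \mapsto \om(A^* B)$ together with the computation already carried out in the proof of Proposition \ref{OmProp}.

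First I would recall that for a Dirac state $\om$ adapted to $L$ and any $y \in L$, Proposition \ref{OmProp} (or its proof) gives $\om(T_y^* T_y) = 0$, where $T_y := W(y) - 1$. The key tool is the Cauchy--Schwarz inequality
\[
    |\om(B^* C)|^2 \leq \om(B^* B)\, \om(C^* C),
\]
valid for any state $\om$ since $(B,C) \mapsto \om(B^* C)$ is a positive semidefinite sesquilinear form. Applying this with $C = T_y = W(y) - 1$ and $B = A^*$ (so $B^* = A$) yields
\[
    |\om(A (W(y) - 1))|^2 = |\om(A\, T_y)|^2 \leq \om(A A^*)\, \om(T_y^* T_y) = 0,
\]
whence $\om(A W(y)) = \om(A)$ for every $A \in \W$ and every $y \in L$. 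This establishes the first of the two desired equalities for all $A$.

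For the second equality $\om(A) = \om(W(y)A)$, I would argue symmetrically. Using $\om(B^* C)$ with $B = T_y = W(y) - 1$ and $C = A$ gives $|\om(T_y^* A)|^2 \leq \om(T_y^* T_y)\, \om(A^* A) = 0$; since $T_y^* = W(y)^* - 1 = W(-y) - 1$ and $-y \in L$ whenever $y \in L$ (as $L$ is a subspace), this reads $\om((W(-y)-1)A) = 0$, i.e.\ $\om(W(-y)A) = \om(A)$ for all $y \in L$, which is the same as $\om(W(y)A) = \om(A)$ after replacing $y$ by $-y$. Combining both parts yields the claimed chain of equalities.

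The main (and only real) obstacle is the bookkeeping with adjoints: one must be careful to place $W(y)$ versus $W(-y) = W(y)^*$ correctly in each application of Cauchy--Schwarz and then to use the subspace property of $L$ to replace $-y$ by $y$. No further structure of the Weyl algebra beyond unitarity of the generators (already invoked via $|\om(W(y))| \leq 1$ in Proposition \ref{OmProp}) and the general positivity of states is needed, so once the two Cauchy--Schwarz estimates are set up correctly the argument closes quickly.
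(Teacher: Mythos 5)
Your proposal is correct and follows essentially the same route as the paper: both directions hinge on $\om(T_y^*T_y)=0$ for $T_y = W(y)-1$ (from the proof of Proposition \ref{OmProp}) combined with the Cauchy--Schwarz inequality for the form $(B,C)\mapsto\om(B^*C)$, and the backward direction is the same specialization $A=1$. The only cosmetic difference is that for the equality $\om(W(y)A)=\om(A)$ you run Cauchy--Schwarz a second time and then swap $y$ for $-y$, whereas the paper obtains it in one step from the first estimate using that $\ker\om$ is closed under adjoints; the two are interchangeable.
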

\begin{proof} Clearly, if $\om$ satisfies the above, then putting $A=1$ shows that it is a Dirac state. For the converse statement, suppose $\om$ is a Dirac state adapted to $L$, $y \in L$, and $A \in \W$. As noted already in the proof of Proposition \ref{OmProp}, we have for $T_y := W(y) - 1$ that $\om(T_y^* T_y) = 0$. Therefore the Cauchy-Schwarz inequality yields 
$|\om(A T_y)| ^2 \leq \om(A A^*) \om(T_y^* T_y) = 0$, which implies $A T_y \in \ker \om$, i.e., $\om(A W(y)) = \om(A)$ holds for arbitrary $A \in \W$ and $y \in L$. Since $\ker \om$ is an involutive subset of $\W$ we obtain that also $W(y) A \in \ker \om$ for all $A \in \W$ and $y \in L$ and the proof is complete.
\end{proof}

It turns out that the requirement to be a Dirac state adapted to an isotropic subspace $L$ of $S$ is in conflict with regularity for that state. This result is shown  \cite[Theorem 3.1]{GH:88}, but it is instructive to repeat (and slightly simplify) its proof here.
\begin{theorem}\label{ThmNonreg} If $L \neq \{ 0\}$, then no Dirac state over $\W$ adapted to $L$ can be regular. 
\end{theorem}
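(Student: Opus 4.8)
The plan is to derive a contradiction from assuming that a Dirac state $\om$ adapted to $L \neq \{0\}$ is regular. The key idea is that regularity forces a continuity property, while the Dirac condition forces certain values to be pinned at $1$, and these two requirements clash as soon as $L$ contains a nonzero vector. So first I would fix some $z_0 \in L$ with $z_0 \neq 0$, and consider the one-parameter family $t \mapsto \om(W(t z_0))$. By Proposition \ref{RegProp}, regularity is equivalent to $\lim_{t \to 0} \om(W(tz_0)) = 1$.

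Next, the heart of the argument should be to use Lemma \ref{altformlem} to generate a rigid identity that constrains $\om(W(s z_0))$ for \emph{all} $s$, not just those forced directly by the Dirac condition. Concretely, since $L$ is a subspace, $t z_0 \in L$ for every $t \in \R$, so $\om(W(t z_0)) = 1$ for all $t$ outright. That alone would seem to settle the matter too cheaply, so the genuine content must be that $L \neq \{0\}$ interacts with the Weyl relations \eqref{Weyl} to constrain values of $\om$ on generators $W(w)$ with $w \notin L$. The plan is therefore to pick some $w \in S$ with $\be(z_0, w) \neq 0$ (possible because $\be$ is nondegenerate and $z_0 \neq 0$), and to apply Lemma \ref{altformlem} with $A = W(w)$ and $y = t z_0 \in L$. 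Using the Weyl relation $W(t z_0) W(w) = e^{i \be(tz_0, w)} W(tz_0 + w)$ and its mirror image, I would extract the identity
\[
    e^{-i t \be(z_0,w)} \om(W(w + t z_0)) = \om(W(w)) = e^{i t \be(z_0,w)} \om(W(w + t z_0)),
\]
valid for every $t \in \R$.

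The contradiction then emerges by comparing the two outer expressions: they force $e^{2 i t \be(z_0,w)} \om(W(w + t z_0)) = \om(W(w + t z_0))$ for all $t$, and more usefully $\om(W(w)) = e^{i t \be(z_0,w)} \om(W(w + tz_0))$. Now I would invoke regularity: as $t \to 0$, regularity gives $\om(W(w + t z_0)) \to \om(W(w))$ (continuity of the state along the line through $w$ in direction $z_0$, which follows from Proposition \ref{RegProp} after a translation by $W(w)$, or directly from the strong-operator continuity in the GNS picture applied to $\pi_\om(W(w))\Om_\om$). Taking $t \to 0$ in $\om(W(w)) = e^{it\be(z_0,w)} \om(W(w+tz_0))$ is consistent, but iterating the relation gives $\om(W(w)) = e^{i n t \be(z_0,w)} \om(W(w + n t z_0))$ for integers $n$; choosing $t$ small and tracking phases shows $|\om(W(w))|$ is forced to be constant along the orbit while the phase rotates freely, which is incompatible with continuity unless $\om(W(w)) = 0$. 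Pushing this through for a suitable spanning family of $w$ yields $\om \equiv 0$ off the diagonal in a way that contradicts $\om(1) = 1$ together with positivity.

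I expect the main obstacle to be making the last contradiction clean rather than hand-wavy: the phase-rotation heuristic must be converted into a rigorous statement. The cleanest route is probably to observe that the identity $\om(W(w)) = e^{it\be(z_0,w)}\om(W(w+tz_0))$ combined with regularity (continuity of $t \mapsto \om(W(w+tz_0))$) forces $t \mapsto e^{it\be(z_0,w)}$ to be continuous times a continuous function equalling a constant, hence $\om(W(w+tz_0))$ must itself carry a factor $e^{-it\be(z_0,w)}$; evaluating at a $t$ with $e^{it\be(z_0,w)} = -1$ then gives $\om(W(w)) = -\om(W(w + t z_0))$, and a second application with the Dirac-forced value on the $L$-direction pins $|\om(W(w))| \leq |\om(W(w+tz_0))|$ both ways, collapsing to $\om(W(w)) = 0$ for every $w$ with $\be(z_0,w)\neq 0$. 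Since such $w$ together with $L$-translates generate enough of $\W$, this contradicts $\om$ being a nonzero state. The delicate bookkeeping of phases versus the fixed modulus is where I would concentrate the careful writing.
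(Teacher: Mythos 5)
You reach the paper's key identity---Lemma \ref{altformlem} applied with $A = W(w)$, $y = t z_0 \in L$, combined with the Weyl relations, gives
\[
   e^{-it\be(z_0,w)}\,\om(W(w+tz_0)) \;=\; \om(W(w)) \;=\; e^{it\be(z_0,w)}\,\om(W(w+tz_0)) \qquad (t \in \R),
\]
and this is exactly the identity the paper's proof runs on---but you then discard precisely the part of it that finishes the argument. Comparing the two outermost members, for any $t$ with $e^{2it\be(z_0,w)} \neq 1$ (such $t$ exist since $\be(z_0,w)\neq 0$) you get $\om(W(w+tz_0)) = 0$ outright, and hence $\om(W(w)) = e^{it\be(z_0,w)}\cdot 0 = 0$. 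This step is purely algebraic: no regularity, no iteration, no phase bookkeeping. Instead you declare the one-sided relation $\om(W(w)) = e^{it\be(z_0,w)}\om(W(w+tz_0))$ ``more useful'' and build a continuity/phase-rotation argument on it. That argument cannot work: the one-sided relation admits perfectly continuous, nowhere-vanishing solutions, namely $F(t) := \om(W(w+tz_0)) = e^{-it\be(z_0,w)}F(0)$ with arbitrary $F(0)$, so ``constant modulus with rotating phase'' is not in conflict with continuity. Worse, your ``cleanest route'' evaluates at a $t$ with $e^{it\be(z_0,w)} = -1$; at such $t$ both phases $e^{\pm it\be(z_0,w)}$ equal $-1$, so the two-sided identity degenerates and carries no information---that is the one choice of $t$ that must be avoided.

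Second, your final contradiction is misidentified. You claim that $\om(W(w)) = 0$ for all $w$ with $\be(z_0,w)\neq 0$ ``contradicts $\om(1)=1$ together with positivity,'' i.e.\ contradicts $\om$ being a state. It does not: the Dirac state $\om_0$ of Example \ref{PGauge2} is exactly such a state (it vanishes on every $W(z)$ with $z \notin L^\be$ and equals $1$ on $W(y)$ for $y \in L$), and it is a bona fide state---just not a regular one. The contradiction must be with \emph{regularity}, and it needs one further observation you never make: if $\be(z_0,w)\neq 0$, then $\be(z_0,tw)\neq 0$ for every $t \neq 0$, so the vanishing established above applies along the whole ray, giving $\om(W(tw)) = 0$ for all $t\neq 0$ while $\om(W(0)) = \om(1) = 1$. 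Only now does Proposition \ref{RegProp} deliver the conclusion, since $\lim_{t\to 0}\om(W(tw)) = 0 \neq 1$. Your only invocation of Proposition \ref{RegProp} is at the outset, in the direction $z_0 \in L$, where the Dirac condition makes it vacuous. So the proposal contains the paper's central mechanism but, as written, neither the vanishing step nor the concluding contradiction is carried out correctly.
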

\begin{proof} Let $\om$ be a Dirac state over $\W$ adapted to $L$. By assumption, we have $S \setminus L^\be \neq \emptyset$.\\[1mm] 
\emph{Claim:} Let $z_0 \in S \setminus L^\be$ be arbitrary, then $\om(W(z_0)) = 0$.\\[1mm] 
To prove the claim we choose $y_0 \in L$ such that $\be(y_0,z_0) \neq 0$.  By Lemma \ref{altformlem}, we have for every $t \in \R$,  $\om(W(z_0) W(t y_0)) = \om(W(z_0)) = \om(W(t y_0) W(z_0))$. Applying the Weyl relations  $W(z_0) W(t y_0) = \exp(- t i \be(y_0,z_0)) W(z_0 + t y_0)$ and $W(t y_0) W(z_0) = \exp( t i \be(y_0,z_0)) W(z_0 + t y_0)$ we obtain
$$
   \forall t \in \R \col \quad e^{- t i \be(y_0,z_0)} \om(W(z_0 + t y_0)) = \om(W(z_0)) = e^{ t i \be(y_0,z_0)} \om(W(z_0 + t y_0)).
$$
Since $\be(y_0,z_0) \neq 0$, the outermost members imply $\om(W(z_0 + t y_0)) = 0$ and thus $\om(W(z_0)) = 0$. 

If $z_0 \in S \setminus L^\be$, then $t z_0 \in S \setminus L^\be$ for every $t \neq 0$ and the claim applies then to $t z_0$ and shows that $\om(W(t z_0)) = 0$. We obtain that the map $f \col \R \to \C$, $t \mapsto \om(W(t z_0))$ is discontinuous at $t = 0$, because $f(0) = \om(W(0)) = \om(1) = 1$, whereas $f(t) = 0$ for every $t \neq 0$. Thus Proposition \ref{RegProp} shows that $\om$ cannot be regular. 
\end{proof}

As explained in \cite{GH:88} and \cite{Grundling:06}, the incompatibility of regularity with the Dirac property is resolved by observing that the discontinuity of the GNS representation in the above proof is caused by an element $W(z_0) \in \W \setminus \mathcal{O}$, with $\mathcal{O}$ defined in \eqref{DefO}, while it is the quotient $\mathcal{O} / \mathcal{D}$ that has to be considered as \emph{$C^*$-algebra of physical observables} and the Dirac states on $\mathcal{O}$ are in $1$-$1$-correspondence with the set of all states on $\mathcal{O} / \mathcal{D}$. Thus there exist regular GNS representations of the physical observables defining there also the self-adjoint field operators.

\section{Normal states on von Neumann algebras generated from representations of $\W$}

Let $\pi \col \W \to B(H)$ be a representation of $\W$. Since $\W$ is simple (\cite[Theorem 3.7(i)]{Slawny:72}), we have $\ker(\pi) = \{ 0\}$  and hence $\pi$ is an isometric $^*$-isomorphism of $\W$ with $\pi(\W)$ (the latter as a $C^*$-subalgebra of $B(H)$). 
We denote by $\W_\pi$ the von Neumann algebra generated from $\pi(\W) \subseteq B(H)$ which is obtained as the double commutant $\pi(\W)''$. Moreover, $\pi(\W)''$ agrees with the closure of $\pi(\W)$ in the strong or the weak operator topology or also in the weak* topology of $B(H)$ (cf.\ \cite[Proposition 8.3, Theorem 12.3, and Proposition 21.8]{Conway:00}). Let us call $\W_\pi$ the \emph{Weyl--von Neumann algebra associated with the representation $\pi$}.
In case $\pi$ is irreducible, we have $\W_\pi = B(H)$ (\cite[Theorem 32.6]{Conway:00}). If $\pi_u$ is the universal representation of $\W$, which is defined as the sum over all GNS representations of $\W$, then $\W_{\pi_u}$ is the so-called \emph{enveloping von Neumann algebra} of $\W$ (cf.\ \cite[Section 3.7]{Pedersen:18}).

\begin{remark} (i) The uniqueness result for the Schr\"odinger representation in case of finite dimensional $S$ can be put into the context of separable factor representations (\cite[Section 2]{Slawny:72}). These are representations $\pi$ of $\W$ on separable Hilbert spaces such that the von Neumann algebra $\W_\pi$ is a \emph{factor}, i.e., has the trivial one-dimensional center consisting of scalar multiples of the identity. Von Neumann algebras that are factors, can be distinguished by the so-called types I, II, and III (cf.\ \cite[Section 6.5]{KR:V2}, in particular, \cite[Corollary 6.5.3]{KR:V2}). 
It turns out that in the situation described previously, the factor $\W_\pi$ is always of type I; it is thus isomorphic to the full algebra of bounded operators on some Hilbert space (\cite[Theorem 6.6.1]{KR:V2}). On the other hand, Slawny used in \cite[Section 3]{Slawny:72} explicitly a type II representation of the Weyl relations to define the Weyl algebra as a $C^*$-algebra.

\noindent (ii) In case of the $\theta$-deformed Weyl algebra $\mathcal{A}_\theta$ on the two torus, mentioned in Remark \ref{Wtheta}, the types of operator algebras generated in physically relevant GNS representations in \cite{BNS:91} turned out to be dependent on $\theta$, namely, they produced von Neumann algebras of type I for rational $\theta$, but a type II factor for irrational values of $\theta$. Using a characterization of so-called \emph{tame} discrete groups, one can show that the von Neumann algebra generated in any representation of  $\mathcal{A}_\theta$ 
is always of type I, if and only if $\theta$ is rational (cf.\ \cite{Hoermann:91}).
\end{remark}

Recall that the weak* topology on $B(H)$ is defined by the seminorms $C \mapsto |\text{tr}(C T)|$, where $T$ varies in the set of trace class operators on $H$. \emph{Normal states} on a von Neumann algebra respect limits of increasing nets of hermitian operators in the strong operator topology and may be characterized as the weak* continuous states, or alternatively, as those given in the form $C \mapsto \text{tr}(C T)$ with a positive trace class operator $T$ such that $\text{tr}(T) = 1$ (\cite[Theorem 46.4]{Conway:00}). 

\begin{lemma}
If the representation $\pi$ is regular, then any normal state $\mu$ on $\W_\pi$ defines a regular state $\om := \mu \circ \pi$  on $\W$.
\end{lemma}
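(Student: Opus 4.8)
The plan is to reduce regularity of $\om := \mu \circ \pi$ to the criterion in Proposition \ref{RegProp}, namely to show that $\lim_{t \to 0} \om(W(tz)) = 1$ for every $z \in S$. Writing this out, for fixed $z \in S$ I must verify that $\lim_{t \to 0} \mu(\pi(W(tz))) = 1$, and since $\mu(1) = 1$ and $\pi(W(0)) = 1$, this is the statement that $t \mapsto \mu(\pi(W(tz)))$ is continuous at $t = 0$. So the whole task is to transfer the continuity of the unitary group $t \mapsto \pi(W(tz))$ (which is given, since $\pi$ is assumed regular) through the normal state $\mu$.

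The key observation is that normal states are, by the characterization recalled just before the statement, exactly the weak* continuous states on $\W_\pi$, and equivalently those of the form $C \mapsto \mathrm{tr}(CT)$ for a positive trace class operator $T$ with $\mathrm{tr}(T) = 1$. The natural route is to use this trace class representation together with the fact that regularity of $\pi$ is equivalent in the weak operator topology to continuity in the strong operator topology (this equivalence is recorded in the Remark following Proposition \ref{OmProp}). Thus for every $\xi, \et \in H$ the scalar map $t \mapsto \inp{\xi}{\pi(W(tz))\et}$ is continuous. Using a spectral (eigenvalue) decomposition $T = \sum_k \la_k \inp{e_k}{\cdot}\, e_k$ with $\la_k \geq 0$, $\sum_k \la_k = 1$, I would write
\[
    \mu(\pi(W(tz))) = \mathrm{tr}\bigl(\pi(W(tz))\, T\bigr) = \sum_k \la_k \inp{e_k}{\pi(W(tz)) e_k}.
\]
Each summand is continuous in $t$ by weak operator continuity of the regular group, and the series is dominated uniformly in $t$: since each $\pi(W(tz))$ is unitary, $|\inp{e_k}{\pi(W(tz)) e_k}| \leq 1$, so the tail is controlled by $\sum_k \la_k < \infty$ independently of $t$. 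The dominated-convergence (or uniform-convergence) argument then lets me interchange the limit $t \to 0$ with the summation, giving $\lim_{t\to 0}\mu(\pi(W(tz))) = \sum_k \la_k \inp{e_k}{e_k} = \sum_k \la_k = 1$.

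Concluding, having established $\lim_{t\to 0}\om(W(tz)) = 1$ for every $z \in S$, Proposition \ref{RegProp} immediately yields that $\om$ is a regular state on $\W$. I expect the only genuinely substantive point to be the justification of the interchange of limit and trace: one must ensure the estimate $|\inp{e_k}{\pi(W(tz)) e_k}| \leq 1$ (from unitarity) provides a summable $t$-independent majorant $\la_k$, so that termwise continuity of the eigenvalue series survives passage to the limit. Everything else is bookkeeping, and I would not expect any difficulty in the reduction to Proposition \ref{RegProp} nor in invoking the normal-state trace formula.
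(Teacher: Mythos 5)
Your argument is correct. It reaches the same conclusion as the paper but by a more hands-on route. The paper's proof observes that the map $t \mapsto \pi(W(tz))$ takes values in the unitary group, a bounded subset of $B(H)$ on which the weak operator topology coincides with the weak* topology; since $\pi$ regular gives weak operator continuity, the map is weak* continuous, and composing with the normal state $\mu$ (characterized as a weak* continuous state) immediately gives continuity of $t \mapsto \mu(\pi(W(tz)))$, whence Proposition \ref{RegProp} applies. You instead unpack the normal state via its trace-class representation, $\mu(\cdot) = \mathrm{tr}(\cdot\, T)$ with $T = \sum_k \la_k \inp{e_k}{\cdot}\,e_k$, $\la_k \geq 0$, $\sum_k \la_k = 1$, and run a Weierstrass-type uniform convergence argument: each term $t \mapsto \la_k \inp{e_k}{\pi(W(tz))e_k}$ is continuous by weak operator continuity of the regular group, and unitarity gives the $t$-independent summable majorant $\la_k$, so the sum is continuous and tends to $\sum_k \la_k = 1$ as $t \to 0$. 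In effect you are proving by hand, in the special case needed, the topological fact the paper cites (that the weak operator and weak* topologies agree on bounded sets). What your version buys is self-containedness and elementarity — no appeal to the topology coincidence — at the cost of invoking the spectral decomposition of the density operator; the paper's version is shorter and makes the structural point (normal $=$ weak* continuous) more visible. Both reductions to Proposition \ref{RegProp} are identical and correct.
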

\begin{proof}
For arbitrary $z \in S$,  $g_z \col t \mapsto \pi(W(t z))$ maps into the bounded subset of unitary operators $ U(H) \subseteq B(H)$, on which the weak operator topology agrees with the weak* topology (\cite[Proposition 20.1(b)]{Conway:00}). Thus, $g_z$ is weak* continuous and the composition with the normal state $\mu$ then gives the continuous map $t \mapsto \om(W(tz)) = \mu(\pi(W(tz)))$, which proves regularity of $\om$ by Proposition \ref{RegProp}. 
\end{proof}
In the situation of the above lemma, the normal state $\mu$ can also be considered to induce a regular state on $\widetilde{\W} := \pi(\W)$, if we consider $\widetilde{W}(z) := \pi(W(z))$ ($z \in S$) as the Weyl generators, because $t \mapsto \mu(\widetilde{W}(tz)) =  \mu(\pi(W(tz)))$ is continuous. The following result is immediate from the lemma and Theorem \ref{ThmNonreg}.
\begin{corollary} Let $\pi \col \W \to B(H)$ be a regular representation and $\W_\pi$ be the von Neumann algebra generated from $\pi(\W)$. 
If $L \neq \{ 0 \}$, then there is no normal Dirac state $\mu$ on $\W_\pi$, i.e., there exists no normal state $\mu$ on $\W_\pi$ such that $\mu(\pi(W(y))) = 1$ holds for every $y \in L$.
\end{corollary}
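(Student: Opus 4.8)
The plan is to combine the Lemma immediately preceding this corollary with Theorem~\ref{ThmNonreg} in a direct argument by contradiction. Suppose, for the sake of contradiction, that a normal Dirac state $\mu$ on $\W_\pi$ exists, that is, a normal state $\mu$ on $\W_\pi$ satisfying $\mu(\pi(W(y))) = 1$ for every $y \in L$. Since $\pi$ is assumed regular and $\mu$ is normal, the Lemma gives that $\om := \mu \circ \pi$ is a regular state on $\W$.

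Next I would verify that this $\om$ is a Dirac state adapted to $L$. This is immediate from the defining property of $\mu$: for every $y \in L$ we have
$$
   \om(W(y)) = \mu(\pi(W(y))) = 1,
$$
so $\om$ meets the definition of a Dirac state adapted to $L$. Thus $\om$ is at once regular and a Dirac state adapted to $L$.

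Finally, since $L \neq \{0\}$, Theorem~\ref{ThmNonreg} asserts that no Dirac state over $\W$ adapted to $L$ can be regular, contradicting the two properties of $\om$ just established. Hence the assumed normal Dirac state $\mu$ cannot exist, which is exactly the claim.

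I do not anticipate a genuine obstacle here, as the corollary is a formal consequence of results already in hand; the only point requiring care is to confirm that $\mu \circ \pi$ is indeed a \emph{state} on $\W$ (normalization and positivity), which follows because $\mu$ is a state and $\pi$ is a unital $^*$-homomorphism, so that $\om(1) = \mu(\pi(1)) = \mu(1) = 1$ and positivity is preserved under composition. Once regularity of $\om$ is secured from the Lemma, invoking Theorem~\ref{ThmNonreg} closes the argument.
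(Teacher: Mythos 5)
Your argument is correct and follows exactly the route the paper intends: the paper states that the corollary ``is immediate from the lemma and Theorem~\ref{ThmNonreg},'' and your proof simply spells out that composition $\om = \mu \circ \pi$ is a regular Dirac state adapted to $L$, contradicting Theorem~\ref{ThmNonreg}. The extra check that $\mu \circ \pi$ is a state is a reasonable bit of care but adds nothing beyond the paper's implicit reasoning.
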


The following example is inspired by a Dirac state that has been applied in  \cite{NT:92} to implement a gauge condition in Quantum Electrodynamics.

\begin{example}\label{PGauge2} We take up Example \ref{PGauge}, where $S = Q \times Q$ with a real Hilbert space $Q$ and $L := \{ 0 \} \times Q$. We apply \cite[Proposition 3.1]{Petz:90} to show that there exists a state $\om_0$ on $\W$ such that
$$
      \forall (z_1,z_2) \in Q \times Q \col \quad \om_0(W(z_1, z_2)) = \left. \begin{cases} 0 & \text{if } z_1 \neq 0,\\ 1 & \text{if } z_1 = 0.\end{cases} \right\} =: g_0(z_1,z_2).
$$
We have to verify that the function $g_0 \col S \to \C$ satisfies $g_0(0) = 1$ (which is obvious) and that the map $h \col S \times S \to \C$ with $h (x,y) := g_0(x - y) \exp(- i \be(x,y))$ defines a \emph{positive (semi)definite kernel}, i.e., for all $n \in \N$, $x^{(1)}, \ldots x^{(n)} \in S$, and $c_1, \ldots, c_n \in \C$, the inequality $\sum_{j,k=1}^n c_j \ovl{c_k} h(x^{(j)},x^{(k)}) \geq 0$ holds: 
Writing $x^{(j)} = (x^{(j)}_1, x^{(j)}_2)$ ($j=1, \ldots, n$) and $I := \{ (j,k) \in \{1,\ldots,n\}^2 \mid  x^{(j)}_1 = x^{(k)}_1\}$ we have $g_0(x^{(j)} - x^{(k)}) = 1$, if $(j,k) \in I$, and $g_0(x^{(j)} - x^{(k)}) = 0$, if $(j,k) \notin I$.  Recalling \eqref{MainBeta}, we obtain (interchanging $x^{(j)}_1$ with $x^{(k)}_1$ for $(j,k) \in I$ at the second equality)
\begin{multline*}
   \sum_{j,k=1}^n c_j \ovl{c_k}\, h(x^{(j)},x^{(k)}) = \sum_{j,k=1}^n c_j \ovl{c_k}\,  g_0(x^{(j)} - x^{(k)}) e^{- i \be(x^{(j)},x^{(k)})}\\
   = \sum_{(j,k) \in I} c_j \ovl{c_k}\,  e^{- i \be((x^{(k)}_1, x^{(j)}_2),(x^{(j)}_1, x^{(k)}_2))}
   = \sum_{(j,k) \in I} c_j \ovl{c_k}\,  e^{\frac{i}{2} (x^{(j)}_1 | x^{(j)}_2)} e^{ - \frac{i}{2} (x^{(k)}_1 | x^{(k)}_2)} 
  =  \sum_{(j,k) \in I}^n d_j \ovl{d_k},  
\end{multline*}
where we put $d_j := c_j \exp(i (x^{(j)}_1 | x^{(j)}_2)/2)$ ($j=1,\ldots,n$). Observe that $I$ defines an equivalence relation on $\{1,\ldots,n\}$, hence we have a partition $\{1,\ldots,n\} = I_1 \cup \cdots \cup I_m$ and obtain
$$
   \sum_{(j,k) \in I}^n d_j \ovl{d_k} = \sum_{l=1}^m \sum_{(j,k) \in I_l \times I_l} d_j \ovl{d_k} =
   \sum_{l=1}^m \sum_{j \in I_l} d_j   \ovl{\sum_{k \in I_l} d_k} = \sum_{l=1}^m \Big| \sum_{j \in I_l} d_j \Big|^2 \geq 0.
$$

Obviously, $\om_0$ is a Dirac state on $\W$ adapted to $L$, since $\om_0(W(0,z_2)) = 1$ for every $z_2 \in Q$. Denote the corresponding GNS representation by $\pi_0 \col \W \to B(H_0)$ and its standard cyclic vector by $\Om_0$. It is obviously not weakly operator continuous, since for any nonzero  $z_1 \in Q$ and $t \neq 0$, $\inp{\Om_0}{\pi_0(W(tz_1,0)) \Om_0} = \om_0(W(t z_1,0)) = 0$, while $\inp{\Om_0}{\pi_0(W(0,0))\Om_0} =   \inp{\Om_0}{\Om_0} = 1$. The Hilbert space $H_0$ is \emph{not separable}, since for any $y = (y_1,y_2), z=(z_1,z_2) \in Q$ with $y_1 \neq z_1$, we have 
\begin{multline*}
    \inp{\pi_0(W(y)) \Om_0}{\pi_0(W(z)) \Om_0} = e^{-i \be(y,z)} \inp{\Om_0}{\pi_0(W(z_1 - y_1, z_2 - y_2))\Om_0}\\ 
    = e^{-i \be(y,z)} \om_0(W(z_1 - y_1, z_2 - y_2)) = 0.
\end{multline*}

The non-regular GNS representation $\pi_0$ associated with the Dirac state $\om_0$ gives $\pi_0(\W) \subseteq B(H_0)$ as a $^*$-isomorphic isometric image of the Weyl algebra in the form of operators on the non-separable Hilbert space $H_0$. Let $\W_0$ denote the Weyl--von Neumann algebra generated from $\pi_0(\W)$. 
The cyclic vector $\Om_0 \in H_0$ induces the corresponding vector state $\mu$ on $B(H_0)$, given by $\mu(A) := \inp{\Om_0}{A \Om_0}$ for every $A \in B(H_0)$, which is clearly a normal state. Let $\mu_0$ denote the restriction of $\mu$ to $\W_0$. 

By construction, we have $\mu_0 \circ \pi_0 = \om_0$ and therefore 
\begin{center}
$\mu_0$ is a normal Dirac state on $\W_0$ adapted to $L$. \end{center}
Note that the normal state $\mu_0$ on $\W_0 \supseteq \pi_0(\W)$ certainly induces a non-regular state on the $C^*$-algebra $\pi_0(\W)$, since we know that $t \mapsto \mu_0(\pi_0(W(tz))) = \om_0(W(tz))$ is discontinuous, if $z \in S \setminus L$.
\end{example}

We briefly investigate general properties of normal states over Weyl--von Neumann algebras generated from, not necessarily regular, representations. In a representation $\pi \col \W \to B(H)$ and for any $\xi \in H$, we have the vector functional $\nu_\xi \col B(H) \to \C$,  $A \mapsto \inp{\xi}{A \xi}$. From these functionals we may derive the functions $f_\xi \col S \to \C$, defined by 
$$
  \forall \xi \in H, \forall z \in S\col \quad  f_\xi(z) := \nu_\xi(\pi(W(z))) = \inp{\xi}{\pi(W(z))\xi}. 
$$   
Since $|f_\xi(z)| = |\inp{\xi}{\pi(W(z))\xi}| \leq \Norm{\xi}^2 \Norm{\pi(W(z))} =  \Norm{\xi}^2$, every $f_\xi$ belongs to the Banach space $(F_b(S), \norm{\ }{\infty})$  of  bounded functions $S \to \C$.  

\begin{theorem}\label{LimVectSt} Let $\pi \col \W \to B(H)$ be a representation and $\W_\pi$ be the von Neumann algebra generated from $\pi(\W)$.
 If $\mu$ is a normal state over $\W_\pi$ and $h \col S \to \C$ is defined by $h(z) := \mu(\pi(W(z)))$ ($z \in S$), then $h$ belongs to the closure $V_\pi$ of $\text{\rm span} \{ f_\xi \mid \xi \in H\}$ in $F_b(S)$.
\end{theorem}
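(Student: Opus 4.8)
The plan is to exploit the standard structure theorem for normal states, which represents $\mu$ as a countable superposition of vector functionals, and then to upgrade the resulting pointwise identity to sup-norm convergence using the unitarity of the Weyl generators.

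First I would invoke the characterization of normal states recalled just before the theorem: there is a positive trace class operator $T$ on $H$ with $\text{tr}(T) = 1$ such that $\mu(A) = \text{tr}(A T)$ for all $A \in \W_\pi$. Diagonalizing $T$ via the spectral theorem produces an orthonormal system $(e_n)$ and eigenvalues $\lambda_n \geq 0$ with $\sum_n \lambda_n = \text{tr}(T) = 1$, so that $\text{tr}(A T) = \sum_n \lambda_n \inp{e_n}{A e_n}$. Setting $\xi_n := \sqrt{\lambda_n}\, e_n$, this reads $\mu(A) = \sum_n \inp{\xi_n}{A \xi_n} = \sum_n \nu_{\xi_n}(A)$ with $\sum_n \Norm{\xi_n}^2 = 1$. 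Applying this to $A = \pi(W(z))$ yields the pointwise identity $h(z) = \sum_n f_{\xi_n}(z)$, where each $f_{\xi_n}$ belongs to $\{ f_\xi \mid \xi \in H \}$ (note that $f_\xi$ is defined for arbitrary, not necessarily normalized, $\xi \in H$, so the unnormalized vectors $\xi_n$ are admissible).

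Next I would consider the partial sums $h_N := \sum_{n=1}^N f_{\xi_n}$, each of which lies in $\text{span}\{ f_\xi \mid \xi \in H\}$. The key step is a uniform tail estimate: since every $\pi(W(z))$ is unitary, hence of operator norm $1$, we have for all $z \in S$
\[
  |h(z) - h_N(z)| = \Big| \sum_{n > N} \inp{\xi_n}{\pi(W(z)) \xi_n} \Big| \leq \sum_{n > N} \Norm{\xi_n}^2 ,
\]
and the right-hand side is independent of $z$ and tends to $0$ as $N \to \infty$, because the series $\sum_n \Norm{\xi_n}^2 = 1$ converges. Hence $\norm{h - h_N}{\infty} \to 0$, so the $h_N$ converge to $h$ in $(F_b(S), \norm{\ }{\infty})$, placing $h$ in the closure $V_\pi$ of the span.

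There is no serious obstacle here; the only point requiring care is that the decomposition of $\mu$ yields an a priori merely weak$^*$-type (pointwise) convergent series, and the entire content of the argument is the observation that unitarity of the $W(z)$ makes the tail bound uniform in $z$. This is exactly what upgrades pointwise convergence to convergence in the sup norm, which is the topology defining $V_\pi$.
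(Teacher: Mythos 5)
Your proof is correct and follows essentially the same route as the paper: the paper invokes the structure theorem for normal states (a countable decomposition $\mu = \sum_n \nu_{\xi_n}$, citing \cite{KR:V2}) and concludes uniform convergence of the partial sums from $\Norm{\pi(W(z))} = 1$, which is exactly your tail estimate. Your derivation of the decomposition by diagonalizing the trace-class operator $T$ is just a slightly more explicit way of obtaining the same family $(\xi_n)$ with $\sum_n \Norm{\xi_n}^2 = 1$, so there is nothing substantive to add.
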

\begin{proof} By \cite[Theorem 7.1.12]{KR:V2}, there exists a countable orthonormal family $(\xi_n)_{n \in \N}$ of vectors in $H$, such that $\mu(A) = \sum_{n \in \N} \nu_{\xi_n}(A)$ for every $A \in \W_\pi$, where the convergence of the series is uniformly in $A$ varying in any  bounded set of operators. Since $\Norm{\pi(W(x))} = 1$, we deduce that $h$ is the uniform limit of the partial sums $\sum_{n=1}^m f_{\xi_n} \in F_b(S)$ as $m \to \infty$.
\end{proof}

Note that we have a natural topology on $S$, if $S = Q \times Q$ with a real Hilbert space $Q$.  In  case of a regular representation $\pi \col \W \to B(H)$, every function $f_\xi \col S \to \C$ is continuous. Therefore, $\text{\rm span} \{ f_\xi \mid \xi \in H\}$ is contained in the space $C_b(S)$ of bounded continuous functions $S \to \C$ and we obtain the following direct consequence.
\begin{corollary}\label{CorCb} If $S = Q \times Q$ with a real Hilbert space $Q$ and the representation $\pi$ in Theorem \ref{LimVectSt} is regular, then $h \in V_\pi \subseteq C_b(S)$.
\end{corollary}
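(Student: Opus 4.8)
The plan is to split the claim $h \in V_\pi \subseteq C_b(S)$ into the membership $h \in V_\pi$ and the inclusion $V_\pi \subseteq C_b(S)$. The first needs nothing new: Theorem \ref{LimVectSt} already yields $h \in V_\pi$ for every normal state $\mu$, with no hypothesis on $\pi$. So the entire role of regularity is to establish the inclusion $V_\pi \subseteq C_b(S)$, and that is where I would concentrate.

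For the inclusion I would first record the elementary topological fact that $C_b(S)$ is a \emph{closed} subspace of $(F_b(S), \norm{\cdot}{\infty})$, since a uniform limit of bounded continuous functions is again bounded and continuous. Hence it is enough to place the generating family $\{ f_\xi \mid \xi \in H \}$ inside $C_b(S)$: then $\text{span}\{ f_\xi \}$ lies in $C_b(S)$, and its $\norm{\cdot}{\infty}$-closure $V_\pi$ cannot escape the closed set $C_b(S)$. Boundedness of each $f_\xi$ is immediate from $\Norm{\pi(W(z))} = 1$, so the one substantive point is continuity of each $f_\xi$ on $S$ in the Hilbert-space topology.

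To see continuity I would reduce to the origin by means of the Weyl relations. From $|f_\xi(z) - f_\xi(z_0)| \le \Norm{\xi}\,\Norm{(\pi(W(z)) - \pi(W(z_0)))\xi}$ together with the identity $\Norm{(\pi(W(z)) - \pi(W(z_0)))\xi}^2 = 2\Norm{\xi}^2 - 2\Re\big(e^{-i\be(z,z_0)} f_\xi(z_0 - z)\big)$, continuity of $f_\xi$ at an arbitrary $z_0$ follows as soon as $\Norm{(\pi(W(w)) - 1)\xi} \to 0$ for $w \to 0$ in $S$, i.e. as soon as $f_\xi$ is continuous at $0$. This is the familiar mechanism by which a function of positive type that is continuous at the identity turns out to be continuous everywhere.

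The crux --- and the only step going beyond formal manipulation --- is precisely this continuity at $0$ in the norm topology of $S = Q \times Q$. The definition of regularity provides continuity of $t \mapsto \pi(W(tz))\xi$ along each one-parameter subgroup, i.e. of $f_\xi$ along every ray through $0$, whereas what the corollary needs is continuity of $w \mapsto \pi(W(w))\xi$ at $0$ for $w$ tending to $0$ in the full Hilbert topology. I would therefore lean on the statement recorded just before the corollary --- that a regular $\pi$ makes every $f_\xi$ continuous on $S$ --- and regard securing this upgrade from one-parameter to Hilbert-topology continuity as the main obstacle; granting it, the two reductions above close the argument at once.
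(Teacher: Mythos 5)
Your decomposition is the same one the paper uses: Theorem \ref{LimVectSt} already gives $h \in V_\pi$ with no hypothesis on $\pi$, the subspace $C_b(S)$ is closed in $(F_b(S),\norm{\cdot}{\infty})$ because uniform limits of bounded continuous functions are bounded and continuous, and so everything reduces to placing each generator $f_\xi$ in $C_b(S)$. The paper handles that last point in a single sentence preceding the corollary (``in case of a regular representation $\pi$, every function $f_\xi$ is continuous'') and calls the rest a direct consequence; your two explicit reductions, including the Weyl-relation computation that brings continuity of $f_\xi$ at an arbitrary point back to continuity of $w \mapsto \pi(W(w))\xi$ at $w = 0$, are correct and spell out what the paper leaves implicit.

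The step you single out as the crux is, however, a genuine gap --- and it is a gap in the paper's own argument, not only in your write-up. Regularity as defined here (and as characterized in Proposition \ref{RegProp}) only yields continuity of $t \mapsto \pi(W(tz))\xi$ for each fixed $z$, i.e.\ continuity of $f_\xi$ along rays through the origin, and for infinite-dimensional $Q$ this does not imply continuity of $f_\xi$ at $0$ in the Hilbert topology of $S = Q \times Q$. A concrete obstruction: take a discontinuous $\R$-linear functional $\ell$ on $S$ and set $s(z,z) := \Norm{z}^2 + \ell(z)^2$; since enlarging the quadratic form only strengthens the domination inequality $\be(y,z)^2 \leq \tfrac14 s(y,y)\, s(z,z)$, the prescription $\om(W(z)) = e^{-s(z,z)/4}$ still defines a quasifree state, and it is regular because $s(z,z) < \infty$ for every $z$ makes $t \mapsto e^{-t^2 s(z,z)/4}$ continuous. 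Yet $f_{\Om_\om}(z) = e^{-s(z,z)/4}$ is discontinuous at $0$, so $V_{\pi_\om} \not\subseteq C_b(S)$. For finite-dimensional $Q$ the upgrade from ray continuity to norm continuity does hold (via the structure of regular representations as multiples of the Schr\"odinger representation), but in general the corollary needs either that restriction or a strengthened regularity hypothesis, namely continuity of $z \mapsto \pi(W(z))\xi$ at $z=0$ for each $\xi$. In short: your instinct about where the real content lies is exactly right, your proposal is as complete as the paper's proof, and the missing step cannot be supplied from the stated definition of regularity alone.
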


\begin{example}\label{ExC0} (i) Let $S = \R^{2n}$ and $\pi$ be the irreducible \emph{Schr\"odinger representation} on $H=L^2(\R^n)$ with the cyclic vector $\Om$ (a Gaussian on $\R^n$; see \cite[Example 5.2.16]{BR:V2}) satisfying $f_\Om(z) = \inp{\Om}{\pi(W(z))\Om} = \exp(-\Norm{z}^2/4)$.  We observe that $f_\Om$ belongs to the space $C_0(\R^{2n})$ of continuous functions vanishing at infinity. We claim that $V_\pi \subseteq C_0(\R^{2n})$, which implies then that also the function $z \mapsto \mu(\pi(W(z)))$belongs to $C_0(\R^{2n})$  for any normal state $\mu$ on $\W_\pi$.

To prove $V_\pi \subseteq C_0(\R^{2n})$, we start by recalling that  $\text{span}\{ \pi(W(y)) \Om \mid y \in \R^{2n}\}$ is dense in $H$. Let $\xi = W(x)\Om$ and $\eta = \pi(W(y)) \Om$ with $x, y \in \R^{2n}$. An application of $W(-x) W(z) W(y) = e^{i (\be(z,y) - \be(x,z+y))} W(z-x + y)$ gives  
\begin{multline}\label{fxieta}
f_{\xi,\eta}(z) :=  \inp{\xi}{\pi(W(z))\eta}   =  \inp{\pi(W(x)) \Om}{\pi(W(z) W(y)) \Om}\\ 
  = e^{i (\be(z,y) - \be(x,z+y))} \inp{\Om}{\pi(W(z-x+y)) \Om} = e^{i (\be(z,y) - \be(x,z+y))} f_\Om(z-x+y),
\end{multline} 
which shows that also $f_{\xi,\eta}(z) \to 0$ as $\Norm{z} \to \infty$. Suppose $\zeta \in H$ is approximated by a linear combination of the form $\eta := \la_1 \xi_1 + \ldots \la_m \xi_m$ with $\xi_j = \pi(W(x_j)) \Om$, $x_j \in \R^{2n}$, and $\la_j \in \C$. We have
$f_\eta = \sum_{j,k=1}^m \ovl{\la_j} \la_k f_{\xi_j,\xi_k} \in C_0(\R^{2n})$ and the standard estimate
\begin{multline}\label{estdiff}
    |f_\zeta(z) - f_\eta(z)| = |\inp{\zeta}{\pi(W(z))\zeta} - \inp{\eta}{\pi(W(z))\eta}| \\ \leq 
    | \inp{\zeta}{\pi(W(z))(\zeta -\eta)}| + | \inp{\zeta - \eta}{\pi(W(z))\eta}|
    \leq \Norm{\zeta} \Norm{\zeta - \eta} + \Norm{\zeta - \eta} \Norm{\eta}
\end{multline}
implies that $f_\zeta \in C_0(\R^{2n})$. Therefore, $\text{span} \{ f_\xi \mid \xi \in H\} \subseteq C_0(\R^{2n})$, which completes the proof, since $V_\pi$ is the closure. 

Note that the property that $h(z):= \mu(\pi(W(z)))$ defines a function $h \in C_0(\R^{2n})$ for any normal state $\mu$, provides a direct proof that for a nontrivial subspace $L \subseteq \R^{2n}$ no normal Dirac state can exist on $\W_\pi = B(L^2(\R^n))$, since this would require $h(y) = 1$ for all $y \in L$, which gives a contradiction, if we choose $\Norm{y} \to \infty$.

(ii) Generalizing the above example to infinitely many degrees of freedom in the Weyl algebra, let $S$ be a complex Hilbert space with inner product $(.,.)_\C$ and the symplectic form on the underlying real vector space $S$ be defined by $\be(x,y) := \Im (x,y)_\C$/2. We consider the \emph{Fock representation} $\pi \col \W \to B(H)$ (cf.\ \cite[Subsection 5.2.3]{BR:V2}) which is the irreducible GNS representation, with cyclic vector $\Om$, corresponding to the regular and pure state $\om_F$ on $\W$ with the property
$$
   \forall z \in S \col \quad  \om_F(W(z)) = \inp{\Om}{\pi(W(z))\Om} = e^{- \Norm{z}^2/4}.
$$
According to \cite[Theorem 5.2.14]{BR:V2}, the normal states on $\W_\pi = B(H)$ are exactly those possessing densely defined self-adjoint  number operators in their corresponding GNS representations. We can add some qualitative information about the function $h \col S \to \C$ associated with a normal state $\mu$ via $h(z) := \mu(\pi(W(z)))$, following the lines of argument in (i). In fact, the details of reasoning from (i) in showing that $h(z) \to 0$ as $\Norm{z} \to \infty$ can be taken over without change. However, since $S$ is not locally compact, we can no longer claim that $h$ (or any of the $f_\xi$) belongs to $C_0(S)$,  if the latter is defined as the $\norm{\ }{\infty}$-closure of the subspace of continuous functions with compact support, which then yields $C_0(S) = \{ 0\}$. By Corollary \ref{CorCb}, we know a priori that $V_\pi \subseteq C_b(S)$, because the Fock representation is regular, and we can still say that $h \in V_\pi \subseteq C_{b0}(S) := \{ f \in C_b(S) \mid \forall \eps > 0\, \exists R \geq 0\, \forall z \in S, \Norm{z} \geq R \col |f(z)| \leq \eps \}$. 
\end{example}

\section{States on the Weyl algebra $\W$ with parameter space $L^2(\R^n)$}

\subsection{Functions induced from states on the Weyl algebra}

We focus now on the situation of our main example class, where $S$ is the complexification of a real Hilbert space $Q$ and the symplectic form is defined as in \eqref{MainBeta}. We saw in Theorem \ref{ThmNonreg} that (nontrivial) Dirac states $\om$ on the Weyl algebra $\W$ then correspond to discontinuous functions $S \to \C$, $z \mapsto \om(W(z))$. By Theorem \ref{LimVectSt}, normal states over the Weyl--von Neumann algebra $\W_\pi$ in any representation $\pi$ belong to the closed subspace $V_\pi$ generated from vector functionals whithin the Banach space of bounded functions $F_b(S, \C)$. In general, $V_\pi$ is not contained in the subspace of continuous bounded functions $C_b(S)$, as could be seen at the end of Example \ref{PGauge2}, but for GNS representations associated with states inducing measurable functions on $S$ we can guarantee that all of $V_\pi$ consists of Borel measurable functions on $S$. 
\begin{proposition} Let $\pi$ be the GNS representation of $\W$ corresponding to the state $\om$ and let $g \col S \to \C$ denote the function $z \mapsto \om(W(z))$. If $g$ is Borel measurable, then $V_\pi$ (as defined in Theorem \ref{LimVectSt}) is contained in the space of bounded Borel measurable complex functions on $S$.
\end{proposition}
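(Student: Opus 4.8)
The plan is to follow the structure of the argument in Example \ref{ExC0}(i) verbatim, replacing the properties ``continuous'' and ``vanishing at infinity'' throughout by the single property ``Borel measurable.'' Recall that $V_\pi$ is the $\norm{\ }{\infty}$-closure of $\text{span}\{ f_\xi \mid \xi \in H\}$, where $f_\xi(z) = \inp{\xi}{\pi(W(z))\xi}$. Since the class of bounded Borel measurable functions on $S$ is a closed subspace of $F_b(S)$ (it is stable under finite linear combinations and under uniform limits), it suffices to prove that each individual $f_\xi$, for $\xi \in H$, is Borel measurable; the conclusion for $V_\pi$ then follows automatically.

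First I would establish measurability on a dense set. In the GNS construction the cyclic vector is $\Om = \Om_\om$, and by the Weyl relations the linear span of $\{ W(y) \mid y \in S\}$ is already a dense $^*$-subalgebra of $\W$, so $\text{span}\{ \pi(W(y)) \Om \mid y \in S\}$ is dense in $H$, exactly as used in Example \ref{ExC0}. For basis vectors $\xi = \pi(W(x))\Om$ and $\eta = \pi(W(y))\Om$ the computation \eqref{fxieta}, now with $f_\Om = g$, gives
$$
  f_{\xi,\eta}(z) = \inp{\xi}{\pi(W(z))\eta} = e^{i(\be(z,y) - \be(x,z+y))}\, g(z - x + y).
$$
Here the prefactor $z \mapsto e^{i(\be(z,y) - \be(x,z+y))}$ is continuous in $z$, and $z \mapsto g(z-x+y)$ is the composition of the continuous translation $z \mapsto z - x + y$ with the Borel measurable function $g$, hence Borel measurable; the product is therefore Borel measurable. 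By sesquilinearity, for any $\xi, \eta$ in the dense span the function $f_{\xi,\eta}$ is a finite linear combination of such terms, so it too is Borel measurable, and in particular every $f_\xi$ with $\xi$ in this span is Borel measurable.

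To pass to arbitrary $\xi \in H$, I would approximate $\xi$ by vectors $\eta_m$ from the dense span. The estimate \eqref{estdiff} bounds $|f_\xi(z) - f_{\eta_m}(z)|$ by $\Norm{\xi}\,\Norm{\xi - \eta_m} + \Norm{\xi - \eta_m}\,\Norm{\eta_m}$ \emph{uniformly in} $z$, precisely because $\Norm{\pi(W(z))} = 1$ for all $z$. Thus $f_{\eta_m} \to f_\xi$ in $F_b(S)$, and as a uniform limit of Borel measurable functions, $f_\xi$ is Borel measurable. Consequently $\text{span}\{ f_\xi \mid \xi \in H\}$ consists of bounded Borel functions, and taking the closure $V_\pi$ preserves this, which is the claim.

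The only point requiring care — the \emph{main obstacle}, though a mild one in this setting — is the Borel measurability of the exponential prefactor, i.e.\ of $z \mapsto \be(z,y) - \be(x,z+y)$. This reduces to the measurability of the symplectic form $\be$ in its free argument. In the present situation $S = Q \times Q$ is a Hilbert space and $\be$ given by \eqref{MainBeta} is a bounded (hence continuous) bilinear form, so this measurability is immediate; everything else is a routine invocation of the stability of the Borel class under the three operations of composition with a continuous map, finite linear combination, and uniform limit.
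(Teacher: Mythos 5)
Your proof is correct and follows essentially the same route as the paper: reduce to the measurability of each $f_\xi$, compute $f_{\xi,\eta}$ on the dense span of $\{\pi(W(x))\Om \mid x \in S\}$ via \eqref{fxieta} so that only the continuous exponential prefactor and the translate $g(\cdot - x + y)$ appear, and then pass to general $\xi$ by the uniform estimate \eqref{estdiff} together with closedness of the bounded Borel functions under uniform limits. The paper's proof is just a terser version of the same argument, likewise invoking translation invariance of the Borel $\sigma$-algebra and continuity of the exponential factor.
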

\begin{proof} Since $V_\pi$ consists of uniform limits of linear combinations of the functions $f_\xi$ ($\xi \in H$), it suffices to show that every $f_\xi$ is Borel measurable. Reasoning as in Example \ref{ExC0}, more precisely, as in the paragraph containing Equations \eqref{fxieta} and \eqref{estdiff}, everything boils down to the measurability of the following expression with respect to $z \in S$, where $\Om$ denotes the cyclic vector of the GNS representation and $\xi = \pi(W(x)) \Om$, $\eta = \pi(W(y))\Om$  with $x,y \in S$:
$$
   \inp{\xi}{\pi(W(z)) \eta}\\ 
  = e^{i (\be(z,y) - \be(x,z+y))} \inp{\Om}{\pi(W(z-x+y)) \Om} = e^{i (\be(z,y) - \be(x,z+y))} g(z-x+y).
$$
The exponential factor is continuous, hence Borel measurable. Since translation is a homeomorphism $S \to S$, the Borel $\sig$-algebra in $S$ is translation invariant. Therefore, the measurability of $g$ implies that of the translate $g(. -x + y)$ and the proof is complete.
\end{proof}

We come now back to the Dirac state $\om_0$ with GNS representation $\pi_0 \col \W \to B(H_0)$ constructed in Example \ref{PGauge2}. By the above proposition,  for every $z \in S$ and $\xi, \eta \in H_0$, the map $t \mapsto \inp{\xi}{\pi_0(W(tz))\eta}$ is measurable $\R \to \C$, though typically discontinuous. We remark in passing that this provides an alternative reason for the non-separability of $H_0$, since separability of $H_0$ would imply strong continuity of that same map (\cite[Chapter X, Theorem 5.4]{Conway:90}). We will now show that $\om_0$ is approximated by regular states in the sense of pointwise convergence. 
\begin{proposition} The Dirac state $\om_0$ from Example \ref{PGauge2} is the weak* limit of the sequence $(\om_l)_{l \in \N}$ of regular states  given by
\beq\label{quasifrei}
   \om_l(W(z_1,z_2)) = e^{-\frac{l^2}{4}\Norm{z_1}^2 - \frac{1}{4 l^2} \Norm{z_2}^2} \quad (z_1, z_2 \in Q).
\eeq
\end{proposition}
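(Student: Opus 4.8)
The plan is to prove the three facts that together amount to weak* convergence in the dual of the $C^*$-algebra $\W$: that each $\om_l$ is a regular state, that $\om_l(W(z)) \to \om_0(W(z))$ on every Weyl generator, and that pointwise convergence on generators propagates to all of $\W$. Throughout, weak* convergence means $\om_l(A) \to \om_0(A)$ for every $A \in \W$.

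First I would confirm that each $\om_l$ is a regular state. Existence follows from \cite[Proposition 3.1]{Petz:90} exactly as in Example \ref{PGauge2}: the Gaussian $g_l(z) := \exp(-\tfrac{l^2}{4}\Norm{z_1}^2 - \tfrac{1}{4l^2}\Norm{z_2}^2)$ satisfies $g_l(0)=1$, and the kernel $(x,y) \mapsto g_l(x-y)\,e^{-i\be(x,y)}$ is positive semidefinite (these are quasi-free states). One may also note that the rescaling $T_l(z_1,z_2):=(l z_1, l^{-1} z_2)$ is symplectic, since $\be(T_l y, T_l z) = \be(y,z)$; hence $W(z) \mapsto W(T_l z)$ extends to a $^*$-automorphism $\gamma_l$ of $\W$ and $\om_l = \om_1 \circ \gamma_l$, so that the existence of all the $\om_l$ reduces to that of the single state $\om_1$. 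Regularity is then immediate from Proposition \ref{RegProp}, since $t \mapsto \om_l(W(tz)) = \exp(-\tfrac{t^2}{4}(l^2\Norm{z_1}^2 + l^{-2}\Norm{z_2}^2))$ is continuous with value $1$ at $t=0$.

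Next I would compute the limit on generators. Fix $z=(z_1,z_2) \in S$. The term $\tfrac{1}{4l^2}\Norm{z_2}^2$ tends to $0$ as $l \to \infty$ in all cases, while $\tfrac{l^2}{4}\Norm{z_1}^2 \to +\infty$ when $z_1 \neq 0$ and vanishes identically when $z_1 = 0$. Hence $\om_l(W(z)) \to 0$ if $z_1 \neq 0$ and $\om_l(W(z)) \to 1$ if $z_1 = 0$; that is, $\om_l(W(z)) \to g_0(z) = \om_0(W(z))$ for every $z \in S$, with $g_0$ the function from Example \ref{PGauge2}.

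Finally I would upgrade convergence on generators to weak* convergence on all of $\W$. The span $\W_0 := \text{span}\{W(z) \mid z \in S\}$ is a unital $^*$-subalgebra (by $W(y)W(z) = e^{i\be(y,z)}W(y+z)$ and $W(z)^*=W(-z)$), hence norm-dense in $\W$. By linearity the previous step gives $\om_l(B) \to \om_0(B)$ for every $B \in \W_0$. Since all $\om_l$ and $\om_0$ are states, $\Norm{\om_l} = \Norm{\om_0} = 1$, and for arbitrary $A \in \W$ and $B \in \W_0$,
$$|\om_l(A) - \om_0(A)| \leq |\om_l(A-B)| + |\om_l(B)-\om_0(B)| + |\om_0(B-A)| \leq 2\Norm{A-B} + |\om_l(B)-\om_0(B)|.$$
Given $\eps > 0$, first choose $B \in \W_0$ with $\Norm{A-B} < \eps/3$, then $l$ large enough that $|\om_l(B)-\om_0(B)| < \eps/3$; this yields $\om_l(A) \to \om_0(A)$ for every $A$, i.e.\ weak* convergence. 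The computation on generators is entirely elementary, so the only point requiring care is this last step: pointwise convergence on generators does \emph{not} by itself give weak* convergence, and the uniform bound $\Norm{\om_l} \leq 1$ (valid precisely because the $\om_l$ are states) is what licenses the density argument. I expect this interplay of uniform boundedness with density of $\W_0$ to be the main, though routine, technical point.
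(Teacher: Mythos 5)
Your proposal is correct and follows essentially the same route as the paper: verify each $\om_l$ is quasifree/regular, compute the pointwise limit on the Weyl generators by inspecting the sign of the exponent, and upgrade to weak* convergence via the uniform norm bound $\Norm{\om_l}=1$ together with norm density of $\mathrm{span}\{W(z)\mid z\in S\}$. The only cosmetic differences are that you compare $\om_l(A)$ directly with $\om_0(A)$ where the paper runs a Cauchy-sequence estimate, and you fold in the symplectic-rescaling observation that the paper records in a remark after the proof.
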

\begin{proof} Every $\om_l$ is an example of a quasifree, hence regular, state by \cite[Theorem 3.4 and Proposition 3.5]{Petz:90}. 
The sequence $(\om_l)_{l \in \N}$ of states on the $C^*$-algebra $\W$ clearly has the uniform operator norm bound $1$ and the subspace $D := \text{span}\{ W(z) \mid z \in S \}$ is norm dense in $\W$. It therefore suffices to show the convergence to $\om_0$ on every element $W(z)$ ($z \in S$), because then convergence on $D$ is clear and we obtain that $(\om_l(A))_{l \in \N}$ is a Cauchy sequence for any $A \in \W$ upon approximating $A$ by some $B \in D$ and using the standard estimate $|\om_l(A) - \om_m(A)| \leq |\om_l(A) - \om_l(B)| + |\om_l(B) - \om_m(B)| + |\om_m(B) - \om_m(A)| \leq 2 \Norm{\om_l} \Norm{A - B} + |\om_l(B) - \om_m(B)|$.
 
If $z_1 \neq 0$, then $-\frac{l^2}{4}\Norm{z_1}^2 - \frac{1}{4 l^2} \Norm{z_2}^2 \to - \infty$ as $l \to \infty$, hence $\om_l(W(z)) \to 0 = \om_0(W(z))$. If $z_1 = 0$, then obviously $\om_l(W(z)) \to 1 =  \om_0(W(z))$ as $l \to \infty$.
\end{proof}

Note tat in the previous statement, $\om_1$ is the Fock state denoted by $\om_F$ in Example \ref{ExC0}(ii) and the state $\om_l$ is a   rescaling of it stemming from the symplectomorphism $(z_1,z_2) \mapsto (l z_1, z_2/l)$ on $S$.

For the remainder of this section we will consider more specifically the case with $S = L^2(\R^n)$ and symplectic form $ \Im \inp{.}{.}/2$ as in Example \ref{Specases}(ii). The function $g \col L^2(\R^n) \to \C$ \emph{induced by a state} $\om$ on $\W$ is given by
\beq\label{indfun}
    g(\psi) := \om(W(\psi)) \quad  \forall \psi \in L^2(\R^n).
\eeq
Clearly, $g$ is a nonlinear bounded function since $|g(\psi)| = |\om(W(\psi))| \leq \Norm{W(\psi)} = 1$ and, in general, not continuous as can be seen with $g_0(\psi) := \om_0(W(\psi))$ induced by the state $\om_0$ as given in Example \ref{PGauge2}: Here, $L = \{ 0 \} \times Q$ can be identified with the subspace of $L^2$-functions having values in $i \R$; thus we have 
\beq\label{valg0}
\text{$g_0(\psi) = 0$, if $\psi \in L^2(\R^n)$ has non-vanishing real part, and $g_0(\psi) = 1$ otherwise;}
\eeq
choose $\psi_0 \in L^2(\R^n)$ real-valued and non-zero, then we have $\frac{1}{k} \psi_0 \to 0$ as $k \to \infty$ and $g_0(\frac{1}{k}\psi_0) = 0$ for every $k$, but $g_0(0) = 1$.

Let $g_l$ ($l \in \N$) denote the function on $L^2(\R^n)$ associated with the quasifree state \eqref{quasifrei} $\om_l$, i.e.,
\beq\label{quasig}
   g_l(\psi) = e^{- \frac{l^2}{4}\Norm{\Re \psi}^2 - \frac{1}{4 l^2} \Norm{\Im \psi}^2} \quad (\psi \in L^2(\R^n)).
\eeq
Every $g_l$ is continuous $L^2(\R^n) \to \C$, even infinitely often Fr\'{e}chet-differentiable, since the exponent is just the restriction of a continuous $\R$-bilinear form to the diagonal in $L^2(\R^n) \times L^2(\R^n)$ (\cite[8.12.9]{Dieudonne:V1}). The sequence $(g_l)_{l \in \N}$ converges to $g_0$ pointwise on $L^2(\R^n)$, which is essentially a repetition of the weak* convergence of $(\om_l)_{l \in \N}$ shown above. 

\begin{remark}\label{remcol} We may restrict $g_l$ and $g_0$ to the subspace of test functions $\D(\R^n)$ and then ask whether these can be considered to be generalized functions on $\R^n$ in the sense of Colombeau's approach in \cite{Colombeau:84}, which is based on smooth functions $\D(\R^n) \to \C$. As explained in \cite{GKOS:01}, the original notion of differentiability can (on $\D(\R^n)$ and $\E'(\R^n)$) be equivalently replaced by smoothness in the sense of the so-called convenient setting, e.g., described in \cite{KM:97}. The latter can be checked by asking whether smooth curves are mapped into smooth curves. We immediately see that $g_0$ is not smooth in that sense  (picking a real-valued $0 \neq \vphi_0 \in \D(\R)$, the curve $t \mapsto t \vphi_0$ is smooth $\R \to \D(\R^n)$, while $t \mapsto g_0(t \vphi_0)$ is not smooth at $t = 0$). On the other hand, every $g_l$ is smooth, since any smooth curve into $\D(\R^n)$ yields also a smooth curve into $L^2(\R^n)$ in the norm sense and thus the Fr\'{e}chet-differentiablity of $g_l$ implies smoothness of the image curve into $\C$. For $g_l$ to define a Colombeau generalized function as an equivalence class $[g_l] \in \G(\R^n)$, one has to test moderateness as $\eps \to 0$ upon inserting derivatives of $\eps$-scaled delta-nets from $\D(\R^n)$ into derivatives of $g_l$ and then map to the quotient modulo the functions with rapidly vanishing $\eps$-tests. In all the testing calculations for $g_l$ we obtain a common overall factor of the form $\exp(- c \eps^{-n})$, since $\norm{\phi_{\eps,x}}{L^2}^2 = \eps^{-n} \norm{\phi}{L^2}$, if $\phi_{\eps,x}(y)  := \phi((y-x)/\eps) \eps^{-n}$. Therefore, we obtain uniform upper bounds $O(\eps^m)$ as $\eps \to 0$ for every $m \in \N$ and this shows that $[g_l] = [0]$ in $\G(\R^n)$. In fact, this  reasoning applies to any function $g$ of the form $\vphi \mapsto \exp(-\inp{\vphi}{C \vphi}/2)$, where $C$ is a positive (covariance) operator (\cite[Section A.4]{GJ:87}). These functions correspond to (inverse) Fourier transforms of  Gaussian measures on distribution spaces (cf.\  Remark \ref{exGauss}). 
\end{remark}

We recall from \cite[Proposition 3.1]{Petz:90}, see also Example \ref{PGauge2}, that for a given function $g \col S \to \C$ with $g(0) = 1$ there exists a unique state $\om$ on $\W$ with the property
$$
      \forall z \in S\col \quad \om(W(z))  = g(z),
$$
if and only if  the map $h \col S \times S \to \C$, defined by
\beq\label{kernfun}
h (x,y) := g(x - y) \exp(- i \be(x,y)), 
\eeq
is a positive kernel in the sense that for all $n \in \N$, $x^{(1)}, \ldots x^{(n)} \in S$, and $c_1, \ldots, c_n \in \C$, 
\beq\label{poskern}
    \sum_{j,k=1}^n c_j \ovl{c_k} h(x^{(j)},x^{(k)}) \geq 0.
\eeq
In the sequel, we will mostly consider the restrictions of scalar functions on $S = L^2(\R^n)$ to the subspace $\S(\R^n)$ of Schwartz functions.

\subsection{The case of continuous induced functions}
For the specific class of continuous functions $g \col L^2(\R^n) \to \C$, the above correspondence with states on $\W$ has an additional relation with measures on the space of temperate distributions $\S'(\R^n)$, which is based on the so-called 
Bochner-Minlos theorem (cf.\ \cite[Theorem A.6.1]{GJ:87}, \cite[Chapter 4, Section 4]{GV4}, or \cite[Section 7.13]{Bogachev:07}), which we briefly recall: To simplify notation, let us write $\S$, $\S'$ in place of $\S(\R^n)$, $\S'(\R^n)$, respectively, and denote by $\S_\R$ the subspace of \emph{real-valued} functions in $\S$ with the (real) dual space $\S_\R'$. 
The \emph{generating functional} of a regular Borel probability measure $\nu$ on $\S'_\R$ is the function $F \col \S_\R \to \C$, defined by
\beq\label{genfun}
     F(\vphi) := \int_{\S'_\R}  e^{i \dis{u}{\vphi}}  \, d\nu(u) \quad (\vphi \in \S_\R).
\eeq
The basic properties of $F$ are (i) F(0) = 1, (ii) continuity, and (iii) positive (semi-)definiteness, i.e., $0 \leq \sum_{j,k=1}^N \ovl{c_j} c_k F(\vphi_j - \vphi_k)$ for all $N \in \N$, $c_j \in \C$, $\vphi_j \in \S_\R$. Conversely, given a Function $F \col \S_\R \to \C$ with properties (i)-(iii), there exists  a unique regular Borel probability measure $\nu$ on $\S'_\R$ such that \eqref{genfun} holds, i.e., $F$ is a kind of inverse Fourier transform of $\nu$.  

Now let $g \col L^2(\R^n) \to \C$ be continuous with $g(0) = 1$ and such that it defines a positive kernel in the sense of \eqref{kernfun} and \eqref{poskern}.
We  consider $F_1 \col \S_\R \to \C$ obtained by restriction of $g$ to the subspace of real-valued Schwartz functions. Since $\beta(\vphi,\psi) = \Im \inp{\vphi}{\psi}/2 = 0$ for all $\vphi, \psi \in \S_\R$,  \eqref{poskern} implies that $F_1$ is positive (semi-)definite in the sense required for the Bochner-Minlos theorem. Thus, there is a unique regular Borel measure $\nu_1$ on $\S_\R'$ with $\nu_1(\S'_\R) = 1$ such that 
\beq\label{realint}
   \forall \vphi \in \S_\R \col \quad g(\vphi) = F_1(\vphi) = \int_{\S_\R'}  e^{i \dis{u}{\vphi}}  \, d\nu_1(u)
\eeq
and we may formulate the following result.
\begin{proposition} If $\om$ is a state on the Weyl algebra $\W$ over $L^2(\R^n)$ with continuous induced map $g \col L^2(\R^n) \to \C$, $\vphi \mapsto \om(W(\vphi))$, then there is a unique regular Borel probability measure $\nu_1$ on $\S_\R'$ such that \eqref{realint} holds.
\end{proposition}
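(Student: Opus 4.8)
The plan is to verify that the function $g \col L^2(\R^n) \to \C$, $g(\vphi) := \om(W(\vphi))$, induced by the state $\om$ meets exactly the three hypotheses under which the paragraph preceding the statement constructs the measure $\nu_1$ via the Bochner--Minlos theorem: namely $g(0) = 1$, the positive-kernel property \eqref{poskern}, and continuity in the appropriate topology. Once these are in place the conclusion is immediate, since the restriction $F_1 := g|_{\S_\R}$ then satisfies properties (i)--(iii) and Bochner--Minlos supplies the unique regular Borel probability measure $\nu_1$ on $\S_\R'$ obeying \eqref{realint}.

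The value at zero is trivial, $g(0) = \om(W(0)) = \om(1) = 1$. For the positive-kernel property I would exploit the Weyl relations to rewrite the kernel \eqref{kernfun} directly in terms of $\om$. From $W(x)W(y)^* = W(x)W(-y) = e^{-i\be(x,y)} W(x-y)$ one gets $W(x-y) = e^{i\be(x,y)} W(x) W(y)^*$, and hence
$$
   h(x,y) = g(x-y)\, e^{-i\be(x,y)} = \om(W(x-y))\, e^{-i\be(x,y)} = \om\big(W(x) W(y)^*\big).
$$
Setting $A := \sum_{j=1}^n c_j W(x^{(j)})$ for arbitrary $c_1,\dots,c_n \in \C$ and $x^{(1)},\dots,x^{(n)} \in S$, this yields $\sum_{j,k} c_j \ovl{c_k}\, h(x^{(j)},x^{(k)}) = \om(A A^*) \geq 0$ by positivity of the state, which is precisely \eqref{poskern}. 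Equivalently, this is just the forward implication of the Petz correspondence recalled above.

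It then remains to invoke the preceding discussion. Restricting to real-valued Schwartz functions, the symplectic form $\be$ vanishes identically on $\S_\R \times \S_\R$, so \eqref{poskern} collapses to ordinary positive-(semi)definiteness of $F_1$ as required by Bochner--Minlos. The one technical point worth stating explicitly is continuity in the correct topology: the hypothesis supplies continuity of $g$ for the $L^2$-norm, and since the inclusion $\S_\R \emb L^2(\R^n)$ is continuous, $F_1$ is continuous for the Schwartz topology, giving property (ii). Applying Bochner--Minlos then delivers existence and uniqueness of $\nu_1$.

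I do not expect a genuine obstacle here, because the analytic core — the Bochner--Minlos theorem together with the reduction via the vanishing of $\be$ on real functions — has already been carried out in the paragraph immediately before the statement; the proposition is essentially a repackaging of that discussion for the function induced by a state. The only place demanding a little care is matching topologies in the continuity hypothesis (Schwartz versus $L^2$), and recognizing that the positive-kernel condition \eqref{poskern} is exactly the positive-definiteness needed for Bochner--Minlos once it is restricted to $\S_\R$.
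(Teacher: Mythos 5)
Your proposal is correct and follows essentially the same route as the paper, whose ``proof'' is precisely the paragraph preceding the statement: restrict $g$ to $\S_\R$, observe that $\be$ vanishes there so that \eqref{poskern} reduces to ordinary positive-definiteness, and invoke Bochner--Minlos. Your direct verification that $h(x,y)=\om(W(x)W(y)^*)$ yields \eqref{poskern} from positivity of the state (the paper instead cites the Petz correspondence), and your explicit remark on the continuity of the inclusion $\S_\R \emb L^2(\R^n)$, are welcome elaborations of details the paper leaves implicit.
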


\begin{remark}\label{exGauss} (i) For each quasifree state \eqref{quasifrei} with induced function given in \eqref{quasig}, there corresponds a Gaussian measure on $\S'_\R$ (\cite[Theorem A.4.6]{GJ:87} or \cite[Chapter 4, Example in Subsection 4.1]{GV4}) in the sense of  representation \eqref{realint}; this holds more generally for any Gaussian-type function given in terms of a covariance operator as those described at the end of Remark \ref{remcol}.

\noindent (ii) As explained in \cite[Chapter IV, Subsection 5.4]{GV4}, one may use $L^2(\S'_\R,\nu_1)$ as a representation Hilbert space for the canonical commutation relations, i.e., of the Weyl algebra $\W$. For any $\vphi \in \S_\R$, the action of $W(\vphi)$ is multiplication by the function $u \mapsto \exp(i \dis{u}{\vphi})$, while the action of $W(i \vphi)$ on a function $f \in L^2(\S'_\R,\nu_g)$ produces the function $u \mapsto a_\vphi(u) f(u + R_\vphi)$, where $R_\vphi \in \S'_\R$ is defined by $\psi \mapsto \beta(\psi,i\vphi)$ and $a_\vphi \col \S'_\R \to \C$ results from (a pre-defined) action of $W(i \vphi)$ on $1 \in L^2(\S'_\R,\nu_1)$, subject to the functional equation $a_{\vphi_1 + \vphi_2}(u) = a_{\vphi_1}(u) a_{\vphi_2}(u + R_{\vphi_1})$.
\end{remark}

An analogous construction starting from $F_2 \col \S_\R \to \C$, defined by $F_2(\vphi) := g(i \vphi)$ for all $\vphi \in \S_\R$, gives a regular Borel probability measure $\nu_2$ on $\S'_\R$ such that
\beq\label{comint}
   \forall \vphi \in \S_\R \col \quad g(i \vphi) = F_2(\vphi) = \int_{\S_\R'}  e^{i \dis{v}{\vphi}}  \, d\nu_2(v).
\eeq
In case $g$  is of product form in the sense that $g(\psi) = g(\Re \psi + i \Im \psi) = g_1(\Re \psi) g_2(i \Im \psi)$ holds for all $\psi \in \S$, one can easily combine the two measures $\nu_1$ and  $\nu_2$ obtained for the real and imaginary part separately. Fubini's theorem yields a straightforward interpretation in terms of the product measure $\nu_1 \otimes \nu_2$ by
$$
   g(\psi) = g_1(\Re \psi) g_2(i \Im \psi)  =
   \int_{\S'_\R \times \S'_\R}  e^{i (\dis{u}{\Re \psi} + \dis{v}{\Im \psi})}  \,d (\nu_1 \otimes \nu_2) (u,v).
$$
This applies, for example, directly to Gaussian-type functions $g$. If $g \in C_b(\S_\R) \otimes C_b(i \S_\R)$, say $g = \sum_{k=1}^N g_{1,k} \otimes g_{2,k}$ with $g_{1,k} \in C_b(\S_\R)$ and $g_{2,k} \in C_b(i \S_\R)$ and corresponding measures $\nu_{1,k}$ and $\nu_{2,k}$ on $\S_\R'$ ($k=1,\ldots,N$), then we obtain with $\nu := \sum_{k=1}^N \nu_{1,k} \otimes \nu_{2,k}$ the representation
$$
    g(\psi) =  \int_{\S'_\R \times \S'_\R}  e^{i (\dis{u}{\Re \psi} + \dis{v}{\Im \psi})}  \,d\nu (u,v).
$$
It is tempting to think of a situation where $g \in C_b(\S)$ is approximated in norm by a sequence of functions $g^{(m)} \in C_b(\S_\R) \otimes C_b(i \S_\R)$ and each $g^{(m)}$ is given by a Fourier integral of a measure $\nu^{(m)}$ on $\S' \isom \S'_\R \times \S'_\R$ as above. One might then hope that $\nu^{(m)}$ converges to a measure $\nu$ on $\S'$  in an appropriate sense and that a similar Fourier integral representation for $g$ in terms of $\nu$ could be achieved. In particular with the current context, where the underlying topological groups are not locally compact, the author is not aware of results to settle these questions.

Suppose now that $\nu$ is a Borel probability measure on $\S_\R'$ satisfying a \emph{polynomial growth condition} in the sense that any function of the form 
\beq\label{polgrowth}
    u \mapsto \dis{u}{\psi_1} \cdots \dis{u}{\psi_k} \text{ is $\nu$-integrable, where $k \in \N$ and $\psi_j \in \S_\R$ ($j=1,\ldots,k$)}.
\eeq
Let $F$ be given by the formula in Equation \eqref{genfun}, then we claim that $F$ is smooth on the subspace $\D_\R \subseteq \D(\R^n)$ of real-valued test functions in the sense of the convenient setting as described briefly in Remark \ref{remcol} (this smoothness holds also on $\S_\R$).  In fact, let $c \col I \to \D_\R$ be a smooth curve defined on some interval $I \subseteq \R$ and put $f(t,u) := \exp(i \dis{u}{c(t)})$ for every $t \in I$ and $u \in \S'_\R$. We may apply the standard theorems about integrals with dependence on real parameters, since $u \mapsto f(t,u)$ is $\nu$-integrable for every $t \in I$, the function $t \mapsto f(t,u)$ is smooth $I \to \R$ for every $u \in \S'_\R$, and every derivative $\d_t^k f(t,u)$ ($k \in \N_0$) is bounded by some $\nu$-integrable function.
Moreover, inserting $\eps$-scaled and $x$-shifted delta-nets for $\vphi$ as used for testing moderateness in Colombeau's theory, it is easily seen from the structure of higher derivatives, namely as polynomial times exponential with purely imaginary exponent, that we always obtain uniform upper bounds in terms of polynomials in $1/\eps$ for $x$ varying in compact subsets of $\R^n$. Thus, $F$ as given in \eqref{genfun} does define a (real) Colombeau generalized function on $\R^n$. 
To summarize, we have shown the following statement. 
\begin{proposition} Let $\om$ be a state on the Weyl algebra $\W(L^2(\R^n),\be)$ with continuous induced function $g \col L^2(\R^n) \to \C$ and such that the measure $\nu_1$ on $\S_\R'$ in the representation \eqref{realint} satisfies polynomial growth conditions as in \eqref{polgrowth}. Then the restriction of $g$ to real test functions defines a Colombeau generalized functions on $\R^n$. The analogous statement holds for the restriction of $g$ to imaginary parts, if also the measure $\nu_2$ in \eqref{comint} satisfies polynomial growth conditions.
\end{proposition}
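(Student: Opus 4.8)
The plan is to verify that the restriction of $g$ to $\D_\R$ is simultaneously smooth (in the sense of the convenient setting recalled in Remark \ref{remcol}) and moderate, so that it represents a well-defined class in $\G(\R^n)$. Since $g$ is continuous with $g(0)=1$ and, $\om$ being a state, defines a positive kernel in the sense of \eqref{kernfun}--\eqref{poskern}, the preceding discussion furnishes the Bochner--Minlos representation \eqref{realint}, i.e.\ $g(\vphi) = \int_{\S'_\R} e^{i \dis{u}{\vphi}}\, d\nu_1(u)$ for every $\vphi \in \S_\R \supseteq \D_\R$. Everything below is then read off this integral representation together with the polynomial growth hypothesis \eqref{polgrowth} on $\nu_1$.

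First I would establish smoothness. Given a smooth curve $c \col I \to \D_\R$, set $f(t,u) := \exp(i\dis{u}{c(t)})$, so that $t \mapsto g(c(t)) = \int_{\S'_\R} f(t,u)\, d\nu_1(u)$. By the Fa\`a di Bruno formula each $t$-derivative $\d_t^k f(t,u)$ is a universal polynomial in the pairings $\dis{u}{c^{(j)}(t)}$ ($1 \leq j \leq k$) multiplied by the unimodular factor $f(t,u)$. Restricting $t$ to a compact subinterval, the curves $c^{(j)}$ sweep out bounded, hence (by the Montel property of $\D_\R$) relatively compact, subsets of $\D_\R$; the polynomial growth condition \eqref{polgrowth} then provides a single $\nu_1$-integrable function dominating $|\d_t^k f(t,u)|$ uniformly in $t$. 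The standard theorem on parameter integrals lets me interchange $\d_t^k$ with $\int d\nu_1$, showing that $t \mapsto g(c(t))$ is smooth and hence that $g|_{\D_\R}$ is smooth.

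Next I would check moderateness by the testing procedure of Remark \ref{remcol}: insert the $\eps$-scaled, $x$-shifted delta-net $\phi_{\eps,x}$ and differentiate in the base point $x$. Using \eqref{realint}, each such expression is an integral over $\S'_\R$ of a polynomial in the pairings $\dis{u}{\d_x^\al \phi_{\eps,x}}$ against the unimodular factor $\exp(i\dis{u}{\phi_{\eps,x}})$. Because $\d_x^\al\phi_{\eps,x}$ scales as a fixed negative power of $\eps$, the integrals---finite thanks to \eqref{polgrowth}---are bounded by a polynomial in $1/\eps$, uniformly for $x$ in compact subsets of $\R^n$. This is precisely the moderateness estimate, so $g|_{\D_\R}$ is a moderate representative and the class $[g|_{\D_\R}] \in \G(\R^n)$ is well defined.

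I expect the only genuine obstacle to be the uniform domination needed to differentiate under the integral sign: one must produce, for each smooth curve and each order $k$, a single $\nu_1$-integrable majorant of $|\d_t^k f(t,u)|$ valid for all $t$ in a neighborhood. This is exactly where \eqref{polgrowth} is indispensable, combined with the compactness of the families $\{c^{(j)}(t)\}$; the moderateness bounds rest on the same integrability. The assertion for the restriction of $g$ to purely imaginary arguments is then obtained verbatim upon replacing $g$, $\nu_1$, and \eqref{realint} by $\vphi \mapsto g(i\vphi)$, $\nu_2$, and \eqref{comint}.
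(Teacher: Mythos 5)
Your argument follows the paper's own proof essentially verbatim: both establish convenient-calculus smoothness by differentiating $t \mapsto \int_{\S'_\R} e^{i\dis{u}{c(t)}}\,d\nu_1(u)$ under the integral sign using the polynomial growth condition \eqref{polgrowth} to dominate the derivatives, and both obtain moderateness from the polynomial-times-unimodular structure of the derivatives evaluated on $\eps$-scaled delta-nets, yielding bounds polynomial in $1/\eps$ uniformly on compacts. Your additional remarks on Fa\`a di Bruno and on the compactness of the sets $\{c^{(j)}(t)\}$ only make explicit what the paper leaves implicit, so the proposal is correct and takes the same route.
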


\subsection{Harmonic analysis for more general types of induced functions}

Here we investigate how to go beyond continuity of the function induced by a state on the Weyl algebra over $L^2(\R^n)$ and try to find some structure in case of a \emph{Borel measurable} function $g \col L^2(\R^n) \to \C$ that satisfies $g(0) = 1$ and (\ref{kernfun}-\ref{poskern}). In particular, this would be interesting for the discontinuous function $g_0$ constructed in Example \ref{PGauge2} from a Dirac state and which is the pointwise limit of a sequence of continuous functions corresponding to regular states on the Weyl algebra. 

To compare with the finite-dimensional situation, recall that on $\R^d$ any Lebesgue measurable positive definite function coincides almost everywhere with a \emph{continuous} positive definite function (\cite[Theorem 3.10.20]{Bogachev:07}) and is thus the inverse Fourier transform of some nonnegative measure by the classical Bochner theorem.
According to \cite[Corollary 7.13.8]{Bogachev:07}, a function $g \col L^2(\R^n) \to \C$ with $g(0)=1$ is the generating functional of a regular Borel probability measure on $L^2(\R^n)$, if and only if $g$ is positive definite and continuous with respect to the \emph{Sazonov topology}, which is generated by the seminorms $\vphi \mapsto p_T(\vphi) := \Norm{T \vphi}$, where $T$ is a Hilbert-Schmidt operator. 
We immediately see that the function $g_0$ induced from the Dirac state in Example \ref{PGauge2}  is certainly not continuous on $L^2(\R^n)$ with respect to the Sazonov topology, since the latter is coarser than the norm topology. 

Let us denote the restriction to $\S$ again by $g_0$ and note that we have
\beq\label{g01L}
   g_0 = 1_{L}, 
\eeq  
where $1_L$ denotes the characteristic function of the subspace 
\beq\label{defL}
    L := \{ \psi \in \S \mid \Re \psi = 0\}.
\eeq
Let us once more stress a finite-dimensional analog: 
If  $L_0$  is a subspace of $\R^d$, then the density function $1_{L_0}$ on $\R^d$ corresponds to 
the Euclidean surface measure $\de_{L_0}$ on $L_0$ belongs to $\S'(\R^d)$ and has Fourier transform 
\beq\label{endldimanalog}
\FT{\de_{L_0}} = (2 \pi)^{\dim {L_0}} \de_{{L_0}^\perp}
\eeq 
(\cite[Theorem 7.1.25]{Hoermander:V1}),  a multiple of the surface measure on the Euclidean orthogonal complement. We want to establish a variant of this finite-dimensional result involving the annihilator in the sense of dual spaces in place of the orthogonal complement ${L_0}^\perp$ in $\R^d$.  A question then is how to reasonably define $\FT{g_0} \col \S' \to \C$ and whether this coincides with $h_0 := 1_{L^\perp}$ as a function or measure  concentrated on the annihilator $L^\perp \subseteq \S'$, or rather, whether we can represent $g_0$ as an inverse Fourier transform of $h_0$. We certainly will need to deal also with non-finite measures on $\S$ or $\S'$, because even the simplest classical relation establishing $\de_0$ on $\R^d$ as the inverse Fourier transform of $1$, formally written $\de_0(x) = \int_{\R^d} e^{i x \xi} \, d\xi/(2 \pi)^d$,  involves the non-finite measure $d\xi/(2 \pi)^d$ on $\R^d$.

Since we are looking for an appropriate notion of Fourier transform we might as well consider $(\S,+)$ as a commutative topological group, which is certainly not locally compact. A standard way in harmonic analysis (e.g., \cite[Sections 10.2 and 12.1]{Kirillov:76})) is then to define the \emph{Fourier transform} on the convolution Banach algebra $M(\S)$ of complex (finite) Borel measures on $\S$, equipped with the total variation as norm. If $\mu \in M(\S)$, then its Fourier transform $\FT{\mu}$ is a complex-valued function on the \emph{dual group} $\FT{\S}$ consisting of the (continuous) \emph{characters}, i.e., continuous group homomorphisms from $\S$ to the one-dimensional torus group $\T$. The set $\FT{\S}$ is itself a commutative topological group under pointwise multiplication of characters and equipped with the compact-open topology. The abstract definition (\cite[\pg 12, Equation (1)]{Kirillov:76}) gives
\beq\label{FTgroup}
   \forall \la \in \FT{\S}\col \quad  \FT{\mu}(\la) := \int_{\S} \ovl{\la(\vphi)} \, d\mu(\vphi),
\eeq
but we will be able to rewrite this with more functional analytic notation in a moment. 

\begin{example}\label{dedach} Consider $\de_0 \in M(\S)$, then we obtain 
$$
  \FT{ \de_0}(\la) = \int_{\S} \ovl{\la(\vphi)} \, d \de_0(\vphi) = \overline{\la(0)} = 1 \quad \forall \la \in \FT{\S},
$$
and thus $\FT{\de_0} = 1$.
\end{example}

We also have the so-called \emph{co-Fourier transform} which maps an element $\nu \in M(\FT{\S}\;)$ to the function $\widetilde{\nu} \col \S \to \C$, defined by
\beq\label{coFTgroup}
      \forall \vphi \in \S\col \quad  \widetilde{\nu}(\vphi) := \int_{\FT{\S}} \la(\vphi) \, d\nu(\la). 
\eeq 

Since a character $\la \in \FT{\S}$ has to be continuous $\S \to \T \subseteq \C$ and to satisfy $\la(\vphi + \psi) = \la(\vphi) \la(\psi)$ for all $\vphi, \psi$, we obviously obtain examples of such in the following form: Denote by $\S_0'$ the space of  continuous $\R$-linear functionals $\S \to \R$  and let $u \in \S_0'$; then we define $\la_u \col \S \to \T$ by
$$
    \la_u (\vphi) := e^{i \dis{u}{\vphi}} \quad \forall \vphi \in \S. 
$$
It follows from \cite[Lemma 1]{Smith:52} that the map $u \mapsto \la_u$  defines an algebraic isomorphism of groups $\S_0' \to \FT{\S}$ and, in addition, is a homeomorphism, if $\S_0'$ is equipped with the topology $\tau_c$ of uniform convergence on the compact subsets of $\S$. Since $\S$ is a Montel space, the topology $\tau_c$  coincides with the standard (strong) topology on $\S_0'$ (\cite[Proposition 34.5]{Treves:06}) and we obtain the following statement.
\begin{lemma}\label{smithlem} The dual group $\FT{\S}$ is isomorphic as a topological group to $\S_0'$. 
\end{lemma}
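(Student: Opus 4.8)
The plan is to establish the isomorphism $\S_0' \to \FT{\S}$ by verifying three separate facts in sequence: that the map $u \mapsto \la_u$ is a group homomorphism, that it is a bijection onto $\FT{\S}$, and that it is a homeomorphism. The first fact is immediate from the additivity of the pairing: since $\la_u(\vphi + \psi) = e^{i \dis{u}{\vphi + \psi}} = e^{i \dis{u}{\vphi}} e^{i \dis{u}{\psi}} = \la_u(\vphi) \la_u(\psi)$, each $\la_u$ is indeed a character, and $\la_{u+v} = \la_u \la_v$ shows the map respects the group structures. Injectivity follows because $\la_u = \la_v$ forces $e^{i \dis{u - v}{\vphi}} = 1$ for all $\vphi \in \S$; testing on $t\vphi$ for small $t$ and using continuity of the real-linear functional $\dis{u-v}{\cdot}$ gives $\dis{u-v}{\vphi} = 0$ for all $\vphi$, hence $u = v$.

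The main content is surjectivity, and this is where I would lean on \cite[Lemma 1]{Smith:52} as cited. Abstractly, a character $\la \in \FT{\S}$ is a continuous homomorphism $\S \to \T$, and the task is to produce a continuous real-linear functional $u$ with $\la = \la_u$; equivalently, one wants to ``take a continuous logarithm'' of $\la$ near the identity and extend it linearly. Smith's result provides exactly this identification of the character group of a locally convex space with its topological dual (up to the $2\pi$-type normalizations absorbed into the pairing), so I would invoke it to conclude that $u \mapsto \la_u$ is an \emph{algebraic} isomorphism of groups $\S_0' \to \FT{\S}$ and, moreover, a homeomorphism when $\S_0'$ carries the topology $\tau_c$ of uniform convergence on compact subsets of $\S$. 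The compatibility is natural: the compact-open topology on $\FT{\S}$ is by definition the topology of uniform convergence of characters on compact sets, and under the correspondence $\la_u \rightrightarrows \la_{u_0}$ uniformly on a compact $K \subseteq \S$ translates (via the continuous logarithm near $1 \in \T$) to $\dis{u}{\cdot} \to \dis{u_0}{\cdot}$ uniformly on $K$, which is precisely $\tau_c$-convergence.

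It remains only to replace $\tau_c$ by the standard strong topology on $\S_0'$. Here I would cite that $\S(\R^n)$ is a Montel space, so that bounded sets are relatively compact; consequently the topology of uniform convergence on compact sets coincides with the topology of uniform convergence on bounded sets, which is the strong dual topology \cite[Proposition 34.5]{Treves:06}. This identifies $(\S_0', \tau_c)$ topologically with $\S_0'$ in its usual strong topology and completes the proof that $\FT{\S} \cong \S_0'$ as topological groups.

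I expect the genuine obstacle to be surjectivity, i.e.\ showing every character arises as some $\la_u$; this is not a routine computation but rather the substantive analytic input, and it is exactly the role played by the invoked \cite[Lemma 1]{Smith:52}. The homeomorphism claim, once surjectivity is in hand, reduces to matching two descriptions of the same ``uniform convergence on compacta'' topology and is bookkeeping; the passage from $\tau_c$ to the strong topology is a direct appeal to the Montel property via \cite[Proposition 34.5]{Treves:06}.
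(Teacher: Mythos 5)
Your proposal is correct and follows essentially the same route as the paper: both define $\la_u(\vphi) = e^{i\dis{u}{\vphi}}$, delegate the substantive surjectivity/homeomorphism claim to \cite[Lemma 1]{Smith:52}, and then pass from the topology $\tau_c$ of uniform convergence on compacta to the strong topology on $\S_0'$ via the Montel property of $\S$ and \cite[Proposition 34.5]{Treves:06}. The only difference is that you spell out the elementary homomorphism and injectivity checks explicitly, which the paper leaves implicit.
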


\begin{remark} (i) In the general context of a reflexive locally convex vector space $E$ in place of $\S$, the analog of the above observation is also at the basis of a variant of Pontryagin's classical theorem for locally compact abelian groups. Namely, it is true that the canonical map $\iota$ from $E$ into the dual of the dual group $\FT{E}$, where $\iota(x)(\la) := \la(x)$ for $x \in E$ and $\la \in \FT{E}$, is an isomorphism of topological groups. Somewhat surprisingly, in general, reflexivity of a locally convex vector space is not necessary for the Pontryagin-type reflexivity as a topological group; for example, any Banach space has the Pontryagin property, regardless whether it is reflexive or not (see \cite{Smith:52} and also \cite{CDM-P:12} for a broader overview in the context of topological groups). 

\noindent (ii) We have the $\R$-linear isomorphism $\S' \to \S_0'$ given by $u \mapsto \Re u$.{ \small (Injectivity follows, since $\Re u = 0$ implies $u (\psi) =  i \Im u(\psi)$ for all $\psi \in \S$ and then $i u(\psi) = u(i \psi) = i \Im u(i \psi) = i \Im (i u(\psi)) = i \Re u(\psi) = 0$. Surjectivity is established by showing that for any $w \in S_0'$, the map $u \col \S \to \C$, $u(\psi) := w(\psi) - i w(i \psi)$ is $\C$-linear.)} We could therefore alternatively define a character $\chi_v \in \FT{\S}$ for any $v \in \S'$ by 
$$
   \chi_v(\vphi) := e^{i\Re \dis{v}{\vphi}}  \quad \forall \vphi \in \S
$$
and then obtain instead the isomorphism $\FT{\S} \isom \S'$ of topological groups. However, for the formulae of Fourier transforms below, we prefer to avoid the explicit appearance of the real part in the exponent and will stay with the character space $\S_0' \isom \FT{\S}$ instead.
\end{remark}

Lemma \ref{smithlem} allows us to rephrase the definition of the Fourier transform \eqref{FTgroup} and of the co-Fourier transform \eqref{coFTgroup} with some abuse of notation in the following way: For any element $\mu \in M(\S)$, we have as Fourier transform the function $\F\mu = \FT{\mu} \col \S_0' \to \C$, defined by
\beq\label{FTS}
   \forall u \in \S_0' \col \quad  \F\mu (u) = \FT{\mu}(u) := \int_{\S} e^{- i \dis{u}{\vphi}} \, d\mu(\vphi).
\eeq
If $\nu \in M(\S_0')$, then its co-Fourier transform $\widetilde{\nu} \col \S \to \C$ is given by
\beq\label{coFT}
      \forall \vphi \in \S\col \quad  \widetilde{\nu}(\vphi) := \int_{\S_0'} e^{i \dis{u}{\vphi}} \, d\nu(u). 
\eeq 

\begin{example}\label{oddex} We can partially mimic the typical duality trick from distribution theory by employing the obvious definitions for the action of a finite Borel measure on a bounded Borel measurable function, both considered either on $\S$ or on $\S_0'$. Let $\mu \in M(\S)$ and $\nu \in M(\S_0')$ be arbitrary, then we obtain from Fubini's theorem and upon defining $\FT{\nu}(\vphi) := \widetilde{\nu}(-\vphi)$ the following familiar looking relation
\begin{multline*}
   \dis{\FT{\mu}}{\nu} = \int_{\S_0'} \FT{\mu}(u) \, d \nu(u) =  \int_{\S_0'} \int_{\S} e^{-i \dis{u}{\vphi}}  \, d\mu(\vphi) \, d \nu(u) =
   \int_{\S}  \int_{\S_0'} e^{-i \dis{u}{\vphi}}  \, d \nu(u)  \, d\mu(\vphi)\\
    =  \int_{\S} \FT{\nu}(\vphi)  \, d\mu(\vphi) = \dis{\mu}{\FT{\nu}}.
\end{multline*}
Similarly, defining $\widetilde{\mu}(u) := \FT{\mu}(-u)$, we also have $\dis{\widetilde{\mu}}{\nu} = \dis{\mu}{\widetilde{\nu}}$.
The special case with $\mu = \de_0 \in M(\S)$ gives $\widetilde{\de_0} = 1$ and $\dis{1}{\nu} = \dis{\de_0}{\widetilde{\nu}}$, while we obtain for the case $\nu = \de_0 \in M(\S_0')$ that $\FT{\de_0} = 1$ and $\dis{\mu}{1} = \dis{\FT{\mu}}{\de_0}$. However, these relations do not provide us with a complement to Example \ref{dedach} in terms of a formula like $\de_0 = \widetilde{1}$ or $\de_0 = \FT{1}$, because the  duality pairings used above do not extend to Fourier (co-)transforms of infinite measures or bounded Borel measurable functions. One would need to identify test function spaces on $\S$ and on $\S_0'$ that are mapped into each other by the Fourier (co-)transform and at the same time both allow for a duality pairing with infinite measures or bounded Borel functions. 
\end{example}

\begin{remark}\label{oddbem}
In pursuing the quest for an infinite-dimensional analog of the formula $\widetilde{1} = \de_0$, let us add the following bits of heuristics to the above example: Let $\nu_l$ ($\l \in \N$) be a Gaussian measure on $\S_0'$ with generating functional of the form $\widetilde{\nu_l}(\vphi) = \exp(- l^2 \norm{\vphi}{L^2}^2/4)$ (compare with Remark \ref{exGauss}). On the one hand, for any $\mu \in M(\S)$ and writing $d_0 := 1_{\{ 0\}}$, dominated convergence implies that $\dis{\mu}{\widetilde{\nu_l}} \to \mu(\{0\}) = \dis{\mu}{d_0}$ as $l \to \infty$, thus $\widetilde{\nu_l} \to d_0$. On the other hand, noting that $\nu_l$ has covariance operator $l^2$ times the identity on $\S$ (cf.\ \cite[Appendix to Part I, Section A.4]{GJ:87}), we have
$$
    \forall \vphi, \psi \in \S\col \quad \int_{\S_0'} \dis{u}{\vphi} \dis{u}{\psi} \, d\nu_l(u) = l^2 \inp{\vphi}{\psi}.
$$
This formula suggests that $\nu_l$ may be seen as an approximation to an (unbounded) measure on $\S_0'$ with ``some constant density $c > 0$'', i.e., $\nu_l \approx c$ for large $l$. Combining these two aspects we obtain in a vague sense that $c \, \dis{\mu}{\widetilde{1}} = c\, \dis{\widetilde{\mu}}{1} \approx \dis{\widetilde{\mu}}{\nu_l} = \dis{\mu}{\widetilde{\nu_l}} \to \dis{\mu}{d_0}$, which supports the idea to expect some relation like  $\widetilde{1} \approx d_0/c $.  
\end{remark}

Coming back to the search for an analog of formula \eqref{endldimanalog} in case of the subspace $L \subseteq \S$ defined in \eqref{defL} we consider the topological direct sum decomposition 
$$
   \S = K \oplus L \isom K \times L,
$$ 
where $K := \{ \psi \in \S \mid \Im \psi = 0 \}$. This implies then the decomposition
$$
   \S_0' = L^\perp \oplus K^\perp \isom L^\perp \times K^\perp
$$
in terms of the annihilators, where $V^\perp :=\{ u \in \S_0' \mid  \forall \psi \in V \col \dis{u}{\psi} = 0\}$ for any $V \subseteq \S$. Recall that we had identified $g_0 \col \S \to \C$ in \eqref{g01L} with the characteristic function $1_L$ of $L$ and define also $h_0 := 1_{L^\perp}$ as a function on $\S_0'$. Any of the relations  $h_0 = \FT{g_0}$ or $g_0 = \widetilde{h_0}$ could be considered an analog of \eqref{endldimanalog} and we will argue that they hold at least in some approximate sense.  

Recall that the product $f \cdot \rho$ of a bounded Borel measurable function $f$ with a Borel measure $\rho$ is the measure assigning the value $\int_B f(x) \, d\rho(x)$ to any Borel subset $B$.
\begin{lemma} Let $\mu \in M(\S)$ be a product measure with respect to the decomposition $\S \isom K \times L$, i.e., $\mu = \mu_1 \otimes \mu_2$ with $\mu_1 \in M(K)$, $\mu_2 \in M(L)$. 

\noindent (i) We have $g_0 \mu =  \mu_1(\{0\}) (\de_0 \otimes \mu_2)$.

\noindent (ii)  If $\mu = \de_0 \otimes \mu_2$ with $\mu_2 \in M(L)$, then we obtain the following equation of bounded Borel functions on $\S_0'$:
\beq\label{FTg0}
  \forall (u_1,u_2) \in L^\perp \times K^\perp \col \quad  \FT{(g_0 \mu)}(u_1,u_2) =   \FT{\mu_2}(u_2).
\eeq

\end{lemma}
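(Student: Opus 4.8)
The plan is to obtain part (i) by a direct computation of the measure $g_0 \mu$ in the product coordinates of $\S \isom K \times L$, and then to deduce part (ii) from (i) together with Fubini's theorem. For (i), I would first render the characteristic function $g_0 = 1_L$ in these coordinates. Since a Schwartz function lies in $L$ precisely when its $K$-component vanishes, the subspace $L$ corresponds to the slice $\{0\} \times L$, so that $g_0(\vphi_K,\vphi_L) = 1_{\{0\}}(\vphi_K)$; in other words $g_0 = 1_{\{0\}} \otimes 1$ as a function on $K \times L$. Multiplying the product measure $\mu = \mu_1 \otimes \mu_2$ by this tensor factorizes the density into $g_0\mu = (1_{\{0\}}\mu_1) \otimes (1 \cdot \mu_2)$, and it only remains to observe that $1_{\{0\}}\mu_1$ assigns $\mu_1(B \cap \{0\})$ to a Borel set $B \subseteq K$, which equals $\mu_1(\{0\})\de_0(B)$; hence $1_{\{0\}}\mu_1 = \mu_1(\{0\})\de_0$ and the claimed identity follows. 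To be fully rigorous I would verify this on measurable rectangles $B_1 \times B_2$ and invoke uniqueness of the product-measure extension.

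For (ii), with $\mu = \de_0 \otimes \mu_2$ we have $\mu_1 = \de_0$ and $\mu_1(\{0\}) = 1$, so part (i) gives at once $g_0\mu = \de_0 \otimes \mu_2$. It then remains to evaluate $\FT{(g_0\mu)}(u_1,u_2)$ for $(u_1,u_2) \in L^\perp \times K^\perp$ from the definition \eqref{FTS}. The decisive algebraic step is the orthogonal splitting of the pairing: writing $\vphi = \vphi_K + \vphi_L$ and using $\dis{u_1}{\vphi_L} = 0$ (as $u_1 \in L^\perp$) together with $\dis{u_2}{\vphi_K} = 0$ (as $u_2 \in K^\perp$), one obtains $\dis{(u_1,u_2)}{\vphi} = \dis{u_1}{\vphi_K} + \dis{u_2}{\vphi_L}$, so the exponential factorizes. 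Fubini's theorem, which is unproblematic here since both measures are finite and the integrand has modulus one, then separates the integral into a $K$-integral against $\de_0$ and an $L$-integral against $\mu_2$. The former equals $e^{-i\dis{u_1}{0}} = 1$, so in particular the $u_1$-dependence disappears, while the latter is exactly $\FT{\mu_2}(u_2)$, once $u_2 \in K^\perp$ is read as a functional on $L$ via restriction; this yields \eqref{FTg0}.

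I do not anticipate a genuine obstacle, as the content is essentially bookkeeping within the product structures. The one place that warrants care is the identification of $L$ with the slice $\{0\} \times L$ and the reading of $1_L$ as a tensor product of indicator functions, together with the annihilator orthogonality that lets the pairing split additively. Once these are set up, the vanishing of the $u_1$-component in (ii) is forced by the point mass $\de_0$ in the $K$-direction; this is the conceptual heart of the formula and the reason $\FT{(g_0\mu)}$ ``sees'' only the $K^\perp$-variable.
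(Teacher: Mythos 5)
Your proposal is correct and follows essentially the same route as the paper: part (i) is verified on measurable rectangles $B_1 \times B_2$ using $1_L(\vphi_1,\vphi_2) = 1_{\{0\}}(\vphi_1)$ and uniqueness of the product measure, and part (ii) is a Fubini factorization of the Fourier integral in which the $\de_0$-factor contributes $e^{\pm i\dis{u_1}{0}} = 1$. Your explicit remark that the pairing splits additively because $u_1 \in L^\perp$ and $u_2 \in K^\perp$ kill the cross terms is a detail the paper leaves implicit, but it does not change the argument.
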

\begin{proof} (i): Let $B = B_1 \times B_2 \subseteq \S$ with Borel sets $B_1 \subset K$ and $B_2 \subseteq L$. Recalling $1_L(\vphi_1,\vphi_2) = 0$ for $\vphi_1 \neq 0$ and $1_L(0,\vphi_2) =1$, we have
\begin{multline*}
   g_0 \mu (B) = \int_B g_0(\vphi)\, d\mu(\vphi) = \int_{B_2} \int_{B_1} 1_L(\vphi_1,\vphi_2) \,d\mu_1(\vphi_1) d\mu_2(\vphi_2)
   = \int_{B_2}  \mu_1(\{ 0 \}) \de_0(B_1) \,d\mu_2(\vphi_2)\\
   =  \mu_1(\{ 0 \}) \de_0(B_1) \mu_2(B_2) =  \mu_1(\{ 0 \}) (\de_0 \otimes \mu_2)(B),
\end{multline*}
thus, the measures on both sides of the claimed equality agree on a generating family of Borel sets in the product space $K \times L$.

\noindent (ii): We have
\begin{multline*}
   \FT{(g_0 \mu)}(u_1,u_2) = \F{  (\de_0 \otimes \mu_2)}(u_1,u_2) = 
    \int_{K \times L} e^{i (\dis{u_1}{\vphi_1} + \dis{u_2}{\vphi_2})}\, d (\de_0 \otimes \mu_2)(\vphi_1,\vphi_2)\\
   =   \int_K e^{i (\dis{u_1}{\vphi_1}} \, d\de_0(\vphi_1) \int_L e^{i (\dis{u_2}{\vphi_2}} \, d\mu_2(\vphi_2) =
   e^{i \dis{u_1}{0}}\, \FT{\mu_2}(u_2) =   \FT{\mu_2}(u_2).
\end{multline*}
\end{proof}

Let us define the convolution $f * \rho$ or $\rho * f$ of a bounded Borel function $f$ with a complex (finite) Borel measure $\rho$, both defined on $\S_0'$ or both on one of the subspaces $L^\perp$ and $K^\perp$, to be the bounded Borel measurable function given by $u \mapsto \int f(u - v) \, d\rho(v)$. We clearly have for $\de_0 \in M(K^\perp)$ that $\de_0 * \FT{\mu_2} = \FT{\mu_2}$ and that $\rho_1 * 1$ gives the constant function with value $\rho_1(L^\perp)$ for any measure $\rho_1 \in M(L^\perp)$. Denoting the function $(u_1,u_2) \mapsto \FT{\mu_2}(u_2)$ by $1 \otimes \FT{\mu_2}$ we thus have
$$
   (\rho_1 \otimes \de_0) * (1 \otimes \FT{\mu_2}) = (\rho_1 * 1) \otimes (\de_0 * \FT{\mu_2}) = \rho_1(L^\perp) \cdot (1 \otimes \FT{\mu_2})
$$
and deduce therefore from \eqref{FTg0} that
\beq\label{g0h0murho}
    \rho_1(L^\perp) \cdot \FT{(g_0 \mu)} =   (\rho_1 \otimes \de_0) * ( 1 \otimes \FT{\mu_2}) =  (\rho_1 \otimes \de_0) * ( \FT{\de_0} \otimes \FT{\mu_2})
    = (\rho_1 \otimes \de_0) * \FT{\mu}
\eeq
holds for $\mu = \de_0 \otimes \mu_2$ with arbitrary $\mu_2 \in M(L)$ and $\rho_1 \in M(L^\perp)$.

We have $g_0 \mu = \de_0 \otimes \mu_2$ by (i) in the above lemma, hence this can be considered an approximation for a ``surface measure'' $\de_L$  on $L$, if $\mu_2$ is a positive measure with large support (e.g., Gaussian measures as mentioned in Remark \ref{oddbem}). In a similar way, one can derive the equation 
$$
   \rho_1 \otimes \de_0 = h_0 (\rho_1 \otimes \de_0)
$$ 
and this measure resembles an approximation of a ``surface measure'' $\de_{L^\perp}$ on $L^\perp$, if $\supp(\rho_1)$ is large and $\rho_1$ is positive. 
Note that for $\rho := \rho_1 \otimes \de_0$ we obtain $\Norm{\rho} = \rho(\S_0') = \rho_1(L^\perp) \de_0(K^\perp) = \rho_1(L^\perp)$, if $\rho_1$ is positive. From Equation \eqref{g0h0murho} we then immediately obtain the following result.
\begin{proposition} Let $\mu_2 \in M(L)$ and $\rho_1 \in M(L^\perp)$ be finite positive Borel measures and define $\rho := \rho_1 \otimes \de_0$ and $\mu := \de_0 \otimes \mu_2$ as product measures on $L^\perp \times K^\perp \isom \S_0'$ and on $K \times L \isom \S$, respectively. Denote by $g_0$ the characteristic function of $L \subseteq \S$ and by $h_0$ the characteristic function of $L^\perp \in \S_0'$, then we have the relation
\beq\label{closeanalog}
       \Norm{\rho} \cdot \FT{(g_0 \mu)} = (h_0 \rho) * \FT{\mu}.
\eeq
\end{proposition}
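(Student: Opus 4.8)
The plan is to assemble the statement from the three facts already prepared in the preceding discussion, so that the proof reduces to bookkeeping. The genuine content sits in the lemma above and in the chain of identities culminating in \eqref{g0h0murho}; here I would only need to rewrite that relation in the claimed form, reading off the two substitutions that turn it into \eqref{closeanalog}.

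First I would record that $\Norm{\rho} = \rho_1(L^\perp)$. Since $\rho_1$ is a finite positive measure on $L^\perp$ and $\de_0$ is a probability measure on $K^\perp$, the total variation of the product measure $\rho = \rho_1 \otimes \de_0$ is $\rho(\S_0') = \rho_1(L^\perp)\,\de_0(K^\perp) = \rho_1(L^\perp)$, exactly as noted just before the statement. This is the only place where positivity of $\rho_1$ is actually used.

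Next I would verify $h_0 \rho = \rho$. Under the identification $\S_0' \isom L^\perp \times K^\perp$, the product measure $\rho = \rho_1 \otimes \de_0$ is concentrated on $L^\perp \times \{0\}$, which is precisely the copy of the subspace $L^\perp$ on which the characteristic function $h_0 = 1_{L^\perp}$ takes the value $1$. Hence multiplying $\rho$ by $h_0$ changes nothing, i.e.\ $h_0 \rho = \rho$; this is the content of the already-mentioned identity $\rho_1 \otimes \de_0 = h_0(\rho_1 \otimes \de_0)$.

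Finally I would invoke \eqref{g0h0murho}, which for $\mu = \de_0 \otimes \mu_2$ and $\rho_1 \in M(L^\perp)$ reads $\rho_1(L^\perp)\,\FT{(g_0\mu)} = (\rho_1 \otimes \de_0) * \FT{\mu}$. Substituting $\rho_1(L^\perp) = \Norm{\rho}$ on the left and $\rho_1 \otimes \de_0 = \rho = h_0\rho$ on the right yields $\Norm{\rho}\cdot\FT{(g_0\mu)} = (h_0\rho) * \FT{\mu}$, which is the assertion. I do not expect any real obstacle: the only point requiring a moment's care is matching the set on which $\rho$ is concentrated with the set where $h_0 = 1$ under the chosen identification $\S_0' \isom L^\perp \times K^\perp$, and this is immediate from the product structure of $\rho$.
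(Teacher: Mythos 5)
Your proof is correct and follows the paper's own argument exactly: the paper likewise obtains the proposition by combining \eqref{g0h0murho} with the two observations $\Norm{\rho} = \rho_1(L^\perp)$ (using positivity) and $h_0(\rho_1\otimes\de_0) = \rho_1\otimes\de_0$. Nothing is missing.
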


Let us finally exploit Equation \eqref{closeanalog} in an attempt to find at least heuristically an analog of Equation \eqref{endldimanalog}. 
Formally, $\mu_2 \approx 1$ yields $g_0 \mu \approx g_0$ and $\FT{\mu} = 1 \otimes \FT{\mu_2} \approx 1 \otimes \de_0$, so that \eqref{closeanalog} then suggests
$$
     \Norm{\rho} \cdot \FT{g_0} \approx (h_0 \rho) * (1 \otimes \de_0) = (\rho_1 \otimes \de_0) * (1 \otimes \de_0) = 
     (\rho_1 * 1) \otimes (\de_0 * \de_0) = \Norm{\rho} 1 \otimes \de_0 \approx  \Norm{\rho} h_0,
$$
which ``implies''
$$
    \FT{g_0} \approx h_0.
$$
Reasoning in similar way, but starting with a relation of the form $\widetilde{(h_0 \rho)} = \rho_2(\{0\}) (\widetilde{\rho_1} \otimes 1)$ for a product measure $\rho = \rho_1 \otimes \rho_2$ and then considering $(g_0 \mu) * \widetilde{\rho}$, one could also obtain a variant of the above proposition and thus a reasonable suggestion of the relation
$$
     g_0 \approx \widetilde{h_0}.
$$

\bibliography{gue}
\bibliographystyle{abbrv}

\end{document}